\documentclass[%
 reprint,
 amsmath,amssymb
 aps,
]{revtex4-2}

\usepackage[utf8]{inputenc}
\usepackage[T1]{fontenc}
\usepackage{amsmath,amssymb,amsthm, mathrsfs}
\usepackage{cases}
\usepackage{xcolor}
\usepackage{trimclip}
\usepackage{diagbox}
\usepackage{ragged2e}
\usepackage{caption}
\DeclareCaptionJustification{FullJustify}{\justifying}
\usepackage{booktabs}
\usepackage{multirow}
\usepackage{makecell}
\usepackage{threeparttable} 

\usepackage{enumitem}
\usepackage{ragged2e}
\usepackage{graphicx}
\usepackage{dcolumn}
\usepackage{bm}
\usepackage{hyperref}
\usepackage[colorinlistoftodos,prependcaption,textsize=tiny]{todonotes}
\usepackage{graphicx}
\usepackage{pgfplots}

\pgfplotsset{compat=newest}

\usepackage{subcaption}

\newcommand{\R}{\mathbb{R}}

\newcommand{\Bc}{\mathcal{B}}

\newcommand{\shortarrow}{\mathrel{\mathpalette\shortInternalArrow\relax}}
\newcommand{\shortInternalArrow}[2]{%
  \clipbox*{{.45\width} -2pt {\width} {\height+2pt}}{$#1\rightarrow$}%
}

\newcommand{\Oin}{\Omega_{\mathrm{in}}}
\newcommand{\Oout}{\Omega_{\mathrm{out}}}
\newcommand{\Din}{D_{\mathrm{in}}}
\newcommand{\Dout}{D_{\mathrm{out}}}
\newcommand{\cin}{c_{\mathrm{in}}}
\newcommand{\cout}{c_{\mathrm{out}}}
\newcommand{\kin}{\kappa_{\mathrm{in} \shortarrow \mathrm{out}}}
\newcommand{\kout}{\kappa_{\mathrm{out} \shortarrow \mathrm{in}}}
\newcommand{\rhoin}{\rho_{\mathrm{in}}}
\newcommand{\rhoout}{\rho_{\mathrm{out}}}
\newcommand{\dip}{\mathrm{dip\,\,depth}}
\newcommand{\Jin}{J_{\mathrm{in} \shortarrow \mathrm{out}}}
\newcommand{\Jout}{J_{\mathrm{out} \shortarrow \mathrm{in}}}
\newcommand{\Sfull}{S_{\mathrm{full}}}
\newcommand{\Shalf}{S_{\mathrm{half}}}
\newcommand{\FI}{\mathrm{FI}}
\newcommand{\Tin}{T_{\text{in}}}
\newcommand{\Tlayer}{T_{\text{layer}}}

\newcommand{\GammaInt}{\Sigma} 
\newcommand{\norm}[1]{\| #1 \|}

\newcommand{\innerH}[2]{( #1, #2 )_H}
\newcommand{\innerV}[2]{( #1, #2 )_V}

\newtheorem{theorem}{Theorem}[section]
\newtheorem{lemma}[theorem]{Lemma}

\newtheorem{definition}[theorem]{Definition}

\newcommand{\eqdef}{\stackrel{\mathrm{def}}{=}}     %

\usepackage{helvet}
\begin{document}

\preprint{APS/123-QED}

\title{Interfacial Permeability, Reflectivity and Preferential\\ Internal Mixing of Phase-Separated Condensates}

\author{Oihan Joyot}
\altaffiliation{These authors contributed equally to this work.}
\author{Zoé Ferrand}
\altaffiliation{These authors contributed equally to this work.}
\author{Fernando Muzzopappa}
\author{Pierre Weiss}
\author{Fabian Erdel}
\altaffiliation{Correspondence: fabian.erdel@cnrs.fr}
\affiliation{Center for Integrative Biology (CBI), University of Toulouse, CNRS, Toulouse, France.}

\begin{abstract}
\noindent
Biomolecular condensates organize biochemical processes by spatially concentrating molecules while allowing for dynamic exchange with their surroundings. However, transport across their interface can be strongly attenuated, leading to enhanced retention and preferential internal mixing. Two key mechanisms have been proposed to describe this behavior: biased interfacial reflectivity, which compares how strongly particles are reflected at the interface when attempting to enter or leave the condensate, and interfacial resistance, which sets the kinetic rate at which particles can cross the interface. Quantifying these parameters experimentally has remained challenging. Here, we present a theoretical and experimental framework to address this issue, extending our previously developed half-FRAP approach. We solve the spherical diffusion problem with a semipermeable interface by spectral decomposition. By evaluating the information content of the integrated recovery curves, we show that they encode sufficient information to recover interfacial parameters over extended regions of parameter space. Applying our framework to tunable coacervates composed of poly-lysine and hyaluronic acid, we find that their interfaces exhibit strongly biased reflectivity and substantial resistance, both driving preferential internal mixing. These parameters depend on salt concentration, linking interfacial transport to intermolecular interaction strength and position in the phase diagram. Our results establish a quantitative connection between interfacial properties and condensate dynamics, revealing how their interplay gives rise to distinct transport regimes.
\end{abstract}

\maketitle

\section{\label{sec:level1}Introduction}
Biomolecular condensates are found across all domains of life, playing roles in organizing molecules in time and space. Many condensates are thought to assemble through attractive multivalent interactions \cite{banani2017biomolecular, mittag2022conceptual}, forming dynamic assemblies in which molecules can diffuse. This behavior is crucial for many biological processes, e.g., biochemical reactions where substrates and products turn over. Molecular diffusion within condensates and exchange with the surrounding medium has been studied by Fluorescence Recovery After Photobleaching (FRAP), where labeled particles are bleached and the fluorescence recovery is followed over time \cite{zhang2024the, muzzopappa2022detecting, hubatsch2021quantitative, taylor2019quantifying}. Alternatively, half-FRAP experiments have been carried out, in which one half of the condensate is bleached before the recovery is followed \cite{muzzopappa2022detecting, erdel2020mouse, brangwynne2009germline}. While many condensates are based on molecules establishing attractive intermolecular interactions with each other, their diffusion within condensates and their exchange with the surrounding medium has commonly been described by models considering non-interacting particles diffusing in a domain surrounded by a semipermeable interface \cite{hubatsch2025transport, zhang2024the, muzzopappa2022detecting, bo2021stochastic, hubatsch2021quantitative, taylor2019quantifying}. These models have been used to interpret FRAP experiments and to extract parameters describing molecular transport. Two phenomena have been invoked to describe the effect of the semipermeable interface in these models: (i) Asymmetric reflectivity of the interface \cite{bo2021stochastic, hubatsch2021quantitative}, which compares how strongly particles are reflected when they encounter the interface from the inside or the outside, thereby determining the equilibrium concentrations on both sides of the interface, and (ii) interfacial resistance \cite{hubatsch2025transport, zhang2024the, taylor2019quantifying}, which attenuates the kinetic rate at which particles can cross the interface and thereby slows down the process of reaching equilibrium. It is currently not clear how both phenomena interplay to regulate interfacial transport and preferential internal mixing of biomolecular condensates.
To address this issue, we extend our previously developed half-FRAP approach to explicitly consider the influence of interfacial resistance and biased reflectivity in the context of condensates formed by associative phase separation. We solve the 3-dimensional diffusion problem for particles in a semipermeable sphere by spectral decomposition, and we use the solution to derive equations describing the recovery in half- and full-FRAP experiments. We show how to determine the reflective bias and the interfacial resistance from integrated recovery curves in conjunction with quantitative fluorescence imaging and fluorescence correlation spectroscopy, and we apply our workflow to PLL-HA coacervates at different MgCl\textsubscript{2} concentrations, which represent a tunable model system for biomolecular condensates. We find that the interfaces of PLL-HA coacervates exhibit substantial reflective bias of $\rho \geq 50\%$ and interfacial resistance of $1/\kappa^* \geq 3$, causing particles to undergo multiple rounds of internal mixing before escaping from the coacervate. Both parameters depend on salt concentration, implying that interfacial properties are linked to the strength of attractive interactions. Interpreting this salt dependence in the context of a sticky polymer model suggests that molecules have to overcome an electrostatic energy barrier when leaving the condensate, which disappears near the coexistence line between the one- and two-phase regime. Our results suggest that condensates share a set of common features while exhibiting diverse interfacial properties that give rise to distinct dynamic signatures and transport regimes.

\section{\label{sec:level2}Results}

\subsection{Diffusion of particles in a semipermeable sphere}

\subsubsection{Mathematical model}
\label{sec:model}
In the following, we introduce the mathematical framework. We consider a spherical domain $\Omega=\Bc(0,L)$ decomposed as the union of another smaller spherical domain $\Oin=\Bc(0,R)$ of radius $R$ surrounded by an annulus $\Oout = \Bc(0,L) \backslash \Oin$ of radius $L$.
We let $c(x,t)$ denote the concentration of bleached particles at position $x$ and time $t$. 
We let $\cin = c|_{\Oin}$ and $\cout = c|_{\Oout}$ denote the inner and outer concentrations, respectively.
We assume that the diffusion coefficient $D$ is constant in each domain: 
\begin{equation*}
    D(x) = \begin{cases}
        D_{\textrm{in}} & \text{if } x \in \Oin, \\
        D_{\textrm{out}} & \text{if } x \in \Oout.
    \end{cases}
\end{equation*}
We consider the following diffusion equation~\cite{scott1951diffusion,zhang2024the}:
\begin{align}
    \frac{\partial c}{\partial t} &= \nabla \cdot (D \nabla c), \label{eq:diffusion_eq}\\ 
    -\Din \partial_r \cin|_{r=R} &= -\Dout \partial_r \cout|_{r=R}, \label{eq:flux_conservation}\\ 
    -\Din \partial_r \cin|_{r=R} &= \kappa \left( \cin(R_-) - \Gamma \cout(R_+)\right)\notag\\ &= \kin \cin(R_-) - \kout \sqrt{\Delta} \, \cout(R_+), \label{eq:robin_boundary}\\
    \Dout \partial_r \cout|_{r=L}   & = 0. \label{eq:ext_neumann}
\end{align}
The boundary condition in Eq.~\eqref{eq:flux_conservation} ensures the conservation of the flux at the interface $r=R$.
The boundary condition in Eq.~\eqref{eq:robin_boundary} models the permeation of molecules through the interface with permeability $\kappa$ (or, equivalently, with interfacial resistance $1/\kappa$) and partition coefficient $\Gamma$. The latter equals the ratio of inner and outer concentration at equilibrium. $\Delta = \Dout/\Din$ denotes the ratio of outer and inner diffusion coefficients, and the parameters $\kin = \kappa$ and $\kout = \kappa \Gamma / \sqrt{\Delta}$ represent the permeabilities of the interface seen by particles in the inner and the outer domain, respectively. They determine how rapidly the concentrations on both sides of the interface reach their equilibrium values. 
Finally, the boundary condition in Eq.~\eqref{eq:ext_neumann} ensures that no flux crosses the external boundary of the outer domain at $r=L$.

\subsubsection{Solution via spectral decomposition \label{sec:solution}}

Similar diffusion problems in various geometries and dimensions were studied using different methodologies, e.g., \cite{scott1951diffusion,powles1992exact,grebenkov2008analytical,carr2018modelling,Moutal2019,taylor2019quantifying,schafer2020spherical,bressloff2022probabilistic,zhang2024the}. We extend this work by solving the spherical 3D case based on spectral decomposition. From a mathematical point of view, this problem involves an operator $\mathcal{A}$, which is self-adjoint with compact resolvent (see Theorem~\ref{thm:spectral}).
Hence, the solution can be written as $c(t) = \exp(-t\mathcal{A})c_0$, where $\exp(-t\mathcal{A})$ denotes the diffusion semigroup generated by $\mathcal{A}$ and $c_0 = c(\cdot, 0)$ is the initial concentration at time $t=0$. The Lions' theorem~\cite[Theorem X.9]{brezis2011functional} allows us to establish the well-posedness of the problem. In particular, it admits a unique solution $c\in C([0,+\infty); L^2(\Omega))\cap C^1((0,+\infty); L^2(\Omega))$ (see Appendix~\ref{sec:existence}). We can write the solution at arbitrary time $t>0$ as
\begin{align}
\label{eq:concentration-from-green}
    c(x,t) = &\int_{\Oin} G_t(x,x') \, c_0(x')\,dx' \notag \\ + \Gamma &\int_{\Oout} G_t(x,x') \, c_0(x')\,dx',
\end{align}
where $G_t:\Omega\times \Omega\to \R$ is the integral kernel of the semigroup, i.e., a time-dependent Green's function associated with $\mathcal{A}$.
From a physical perspective, $G_t(x,x')$ is related to the probability density $p_t(x,x')$ for a particle initially located at position $x'$ to be found at position $x$ at time $t$ via
\begin{equation}
    p_t(x,x') = \begin{cases}
        G_t(x,x') & \text{if } x' \in \Oin, \\
        \Gamma G_t(x,x') & \text{if } x' \in \Oout.
    \end{cases}
\label{eq:distribution}
\end{equation}
Since $\mathcal{A}$ has a compact resolvent, its spectrum is countable (see Theorem~\ref{thm:spectral}) and the associated eigenfunctions $(\Phi_k)_{k\geq 0}$ form an orthonormal basis of $L^2(\Omega)$. As a consequence, the Green's function admits the spectral representation
\begin{equation}
    G_t(x,x') = \sum_{k} e^{-\lambda_k t}\, \Phi_k(x) \Phi^*_k(x'),
    \label{eq:green-function}
\end{equation}
for some eigenvalues $\{\lambda_k\}_{k\in\mathbb{N}}$. Owing to the radial symmetry of the problem, this Hilbert basis can be constructed explicitly. The angular dependence is given by spherical harmonics, while the radial part is obtained by solving a family of Sturm--Liouville problems. We therefore index the eigenfunctions by $k=(\ell,m,n)$, where $\ell\geq 0$ and $-\ell\leq m\leq \ell$ denote the degree and order of the spherical harmonics, and $n\in\mathbb N$ indexes the radial eigenmodes.

Let $Y_{\ell m}$ denote the spherical harmonics on $S^2$. The eigenfunctions
of $\mathcal A$ can be taken in the separated form

\begin{equation}
    \Phi_{\ell m n}(r,\theta,\varphi)
= Y_{\ell m}(\theta, \varphi)\, f_{\ell n}(r),
\label{eq:eigenfunction}
\end{equation}
with the normalized radial part defined separately as:
\begin{alignat}{3}
f_{\ell n}^\mathrm{in}(r)&=N_{\ell n}\, j_\ell\left(\sqrt\frac{\lambda_{\ell n}}{\Din} r\right)
\,&& \textrm{in } \Oin,\notag\\
f_{\ell n}^\mathrm{out}(r)&=N_{\ell n}\,\beta_{\ell n}\,
\psi_\ell\left(\sqrt\frac{\lambda_{\ell n}}{\Dout} r;\sqrt\frac{\lambda_{\ell n}}{\Dout}L\right)
\,&& \textrm{in } \Oout,
\end{alignat}

where:
\begin{itemize}
 \item $\lambda_{\ell n}$ is, for each $\ell$, the $n$-th solution of the transcendental equation \eqref{eq:transcendental};
   \item $\psi_\ell(z;z_L)=y_\ell'(z_L)j_\ell(z)-j_\ell'(z_L)y_\ell(z)$ is a
        Neumann-adapted radial function satisfying
        $\psi_\ell'(z_L;z_L)=0$, which enforces the no-flux boundary
        condition at $r=L$;
  \item $j_\ell$ and $y_\ell$ denote the spherical Bessel functions of the first and second kind, respectively;
  \item $\beta_{\ell n}$ is the amplitude ratio determined by flux
        continuity at the interface $r=R$ described by equation \eqref{eq:amplitude_ratio};
  \item $N_{\ell n}>0$ is a normalization constant described by equation \eqref{eq:normalization_constant}.
\end{itemize}
Based on Eq. \eqref{eq:green-function}, Eq. \eqref{eq:eigenfunction} and the addition theorem for spherical harmonics \cite{ArfkenWeberHarris}, the Green's function can be written as
\begin{align}
    G_t(x,x') &= \sum_{\ell,m,n} e^{-\lambda_{\ell n} t}\, Y_{\ell m}(\theta, \varphi)\, Y_{\ell m}^*(\theta', \varphi')\,f_{\ell n}(r) f_{\ell n}(r')  \notag\\
    &= \sum_{\ell,n} e^{-\lambda_{\ell n} t}\,\frac{2\ell+1}{4\pi} P_\ell(\textrm{cos}\, \gamma)\, f_{\ell n}(r)\, f_{\ell n}(r').
    \label{eq:green-function2}
\end{align}
Here, $(r, \theta, \varphi)$ is the spherical coordinate representation of $x$, $P_\ell$ denotes the Legendre polynomial of degree $\ell$, and $\cos\gamma = \cos\theta' \cos\theta + \sin\theta \sin\theta' \cos(\varphi-\varphi')$. \\

\begin{figure}
    \centering
    \begin{subfigure}[t]{0.22\textwidth}
        \caption{Semipermeable sphere}
        \vspace{8pt}
        \centering
        \includegraphics[width=\linewidth, height=\linewidth, keepaspectratio]{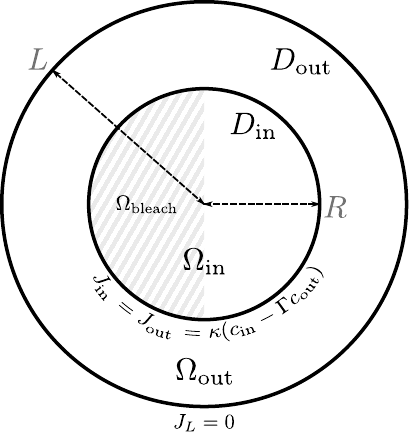}
        \label{fig:scheme-equilibrium_a}
    \end{subfigure}
    \begin{subfigure}[t]{0.22\textwidth}
        \captionsetup{margin={0.1cm, 0cm}}
        \caption{\hspace{0.78cm}Equilibrium state}
        \centering
        \begin{tikzpicture}
            \begin{semilogyaxis}[
                width=1.1\linewidth,
                height=1.2\linewidth,
                xlabel={Reflective bias $\rho$},
                ylabel={Scaled enrichment $\Gamma/\sqrt{\Delta}$},
                ymin=1, ymax=100,
                label style={font=\small},
                tick label style={font=\small}
            ]
                \addplot[blue, thick, domain=0:0.99, samples=500] {(x+1)/(1-x)};
            \end{semilogyaxis}
        \end{tikzpicture}
        \label{fig:scheme-equilibrium_b}
    \end{subfigure}
    \caption{Schematic representation of the diffusion problem and link between reflective bias of the interface and equilibrium partitioning.}
\justifying \noindent
(a) We consider diffusion of non-interacting particles in a spherical domain with radius $R$ that is located in a larger spherical domain with radius $L$. The flux at the interface between both domains is given by the condition shown in the figure, while the flux at the boundary of the larger domain is zero. (b) Scaled equilibrium partition coefficient as a function of the reflective bias $\rho$, which determines how reflective the interface is from the inside and from the outside.
    \label{fig:scheme-equilibrium}
\end{figure}

\subsubsection{Reparametrization \label{sec:reparametrization}}
The model in~\ref{sec:model} depends on the parameters $\Din$, $\Dout$, $\kappa$ and $\Gamma$. In Eq. \eqref{eq:green-function2}, the dependence on the model parameters enters via the eigenvalues $\lambda_{\ell n}$, which solve the transcendental equation \eqref{eq:transcendental}. They are fully determined by the dimensionless parameters $\Gamma$, $\Delta$ and $\kappa^* = \kappa R/\Din$. By introducing dimensionless eigenvalues $\mu_{\ell n}^2 = \lambda_{\ell n} R^2 / \Din$ and the dimensionless time $t^* = t \Din/R^2$, the Green's function becomes fully dimensionless. This shows that $R$ and $\Din$ simply act as scaling factors that set the characteristic length and time scale, while the dimensionless Green’s function, and consequently $\Gamma$, $\Delta$ and $\kappa^*$, encode all nontrivial dynamics. In what follows we will often replace $\Gamma$ by the reflective bias of the interface (Appendix \ref{reflective-bias}), which reads: 
\begin{equation}
  \rho = \frac{1-\kin/\kout}{1+\kin/\kout} = \frac{\Gamma/\sqrt{\Delta}-1}{\Gamma/\sqrt{\Delta}+1}.
  \label{eq:rho-definition}
\end{equation}
This parameter quantifies the directional asymmetry of the interface, which originates from different fractions of incident flux being reflected from the inner and outer sides.
This expression corresponds to the result reported previously for one-dimensional diffusion across a single interface \cite{bo2021stochastic}. By construction, $\rho \in [-1,1]$. If $\rho=-1$, then $\kout=0$ and particles cannot enter the inner domain. If $\rho=0$, then $\kin=\kout$, i.e., the permeability of the interface is the same for particles in the inner and the outer domain. If $\rho=1$, then $\kin=0$ and particles cannot leave the inner domain.

This reparametrization reveals that all key parameters, including the equilibrium partition coefficient $\Gamma$, are directly linked to the dynamics of a single particle, i.e., its diffusion coefficients $\Din$ and $\Dout$ (or, equivalently, $\Din$ and $\Delta$) in each domain, as well as the interfacial resistance $\kappa^{-1}$ and the reflective bias of the interface $\rho$ that govern its dynamics at the interface.

\subsubsection{Preferential internal mixing}
We next sought to quantify how much time particles spend in the inner domain before escaping, which we consider to involve crossing the interface and traversing a characteristic boundary layer to avoid immediate return. To this end, we compute the mean residence time $\tau_\text{res}$ for particles in the semipermeable sphere, and isolate the time associated with interface crossing and escape in the outer domain by subtracting the intrinsic diffusion time $\tau_\text{diff}$ (see Appendix \ref{escape-time}). In particular, we define the degree of internal mixing according to
\begin{equation}
    M \eqdef \frac{(\tau_\text{res}-\tau_\text{diff})-(\tau_\text{free}-\tau_\text{diff})}{(\tau_\text{res}-\tau_\text{diff})+(\tau_\text{free}-\tau_\text{diff})}.
    \label{eq:internal-mixing}
\end{equation}
As shown in Appendix \ref{escape-time}, the residence time $\tau_\text{res}$ reads
\begin{equation}
    \tau_\text{res} = \underbrace{\frac{R^2}{6\Din}\vphantom{\frac{}{\Dout^\text{eff}}}}_{\tau_\text{diff}} + \underbrace{\frac{R}{3\kappa}\vphantom{\frac{}{\Dout^\text{eff}}}}_{\tau_\text{cross}} + \underbrace{\frac{\Gamma R^2}{3\Dout^\text{eff}}\vphantom{\frac{}{\Dout^\text{eff}}}}_{\tau_\text{outer}}.
    \label{eq:residence-time}
\end{equation}
These three terms can be interpreted as the time $\tau_\text{diff}$ to diffuse to the interface, the time $\tau_\text{cross}$ to cross the interface, and the time $\tau_\text{outer}$ to escape through the boundary layer in the outer domain without immediately returning. The effective diffusion coefficient $\Dout^\text{eff}$ is linked to the criterion for particle escape, i.e., the size of the boundary layer (see Appendix \ref{escape-time}). For the case of free diffusion, i.e., $\kappa \rightarrow \infty$, $\Gamma=1$ and $\Dout=\Din$, Eq. \eqref{eq:residence-time} reduces to
\begin{equation}
    \tau_\text{free} = \left(\frac{3+\pi}{1+\pi} \right) \frac{R^2}{6\Din} = \left(\frac{3+\pi}{1+\pi} \right) \tau_\text{diff}.
\end{equation}
Accordingly, the degree of internal mixing reads
\begin{equation}
    M = \frac{\tau_\text{cross}+\tau_\text{outer}-2\tau_\text{diff}/(1+\pi)}{\tau_\text{cross}+\tau_\text{outer}+2\tau_\text{diff}/(1+\pi)} = \frac{(1+\pi)/\kappa_\text{eff}^*-1}{(1+\pi)/\kappa_\text{eff}^*+1}.
    \label{eq:internal-mixing2}
\end{equation}\addvspace{0.3em}\noindent
Here, $1/\kappa_\text{eff}^* = 1/\kappa^* + 1/\kappa^*_\text{outer}$ represents the dimensionless resistance of a series circuit with two resistors: the dimensionless interfacial resistance, $1/\kappa^* = \tau_\text{cross}/2\tau_\text{diff} = \Din/\kappa R$, which indicates how difficult it is for particles to cross the interface, and the dimensionless resistance of the outer domain, $1/\kappa^*_\text{outer} = \tau_\text{outer}/2\tau_\text{diff} = \Gamma\Din/\Dout^\text{eff}$, which indicates how difficult it is for particles that have crossed the interface to diffuse away in the outer domain rather than immediately returning to the inner domain. The value of $1/\kappa_\text{eff}^*$ can be interpreted as the number of rounds of diffusive exploration that particles undergo in the inner domain before successfully escaping.\\
By construction, $M \in [-1,1]$. The case $M=0$ corresponds to the degree of internal mixing observed for free diffusion in a homogeneous medium, while $M=-1$ corresponds to the limit where particles are immediately absorbed into the outer domain when reaching the interface, leading to the minimum internal mixing possible. Finally, $M=1$ corresponds to the limit of perfect confinement, where the escape time diverges and particles undergo exhaustive internal mixing. Accordingly, values of $0 < M \leq 1$ correspond to preferential internal mixing, which can arise from slow kinetics of interfacial crossing ($1/\kappa^* > 1$) and from strong suppression of diffusive escape in the outer domain that promotes rapid return to the inner domain ($1/\kappa^*_\text{outer} > 1$).

\subsubsection{Limiting case I: Fast diffusion in an unbounded outer domain}
To link the model above to our previous work \cite{muzzopappa2022detecting}, we consider the limiting case of fast diffusion in an unbounded outer domain. We assume that the outer domain is always close to its steady state because the concentration $\cout$ equilibrates fast on the time scales set by the other processes in the system. Accordingly, the time derivative of $\cout$ vanishes and the solution satisfies the Laplace equation
\begin{equation}
    \frac{\partial \cout}{\partial t} = \nabla \cdot (D \nabla \cout) \approx 0. \label{eq:laplace_eq}
\end{equation}
In this case, we can write the radial part of the eigenfunctions in the outer domain as
\begin{equation}
    f^\text{out}_{\ell}(r) = C_{\ell} \left(\frac{R}{r}\right)^{\ell + 1}
    \label{eq:radial-quasi-eq}
\end{equation}
Here, $C_{\ell}$ is a normalization constant that is not relevant for the following considerations. Using Eq. \eqref{eq:radial-quasi-eq} and focusing on the isotropic mode, $\ell=0$, which dominates for sufficiently large times or for radially symmetric full-FRAP experiments, the boundary conditions at the interface read 
\begin{align}
   \Din \partial_r f^\text{in}_{0 n}|_{r=R} &= -\frac{\Dout}{R} f^\text{out}_{0 n}(R), \label{eq:robin_boundary_limit}\\
   -\Din \partial_r f^\text{in}_{0 n}|_{r=R} &= \kappa f^\text{in}_{0 n}(R) + \frac{\kappa \Gamma R}{\Dout} \Din \partial_r f^\text{in}_{0 n}|_{r=R}. \label{eq:robin_boundary_limit2}
\end{align}
Eq. \eqref{eq:robin_boundary_limit2} was obtained by inserting $f^\text{out}_{0 n}(R)$ from Eq. \eqref{eq:robin_boundary_limit} into Eq. \eqref{eq:robin_boundary}. Rearranging Eq. \eqref{eq:robin_boundary_limit2} yields
\begin{equation}
        -\Din \partial_r f^\text{in}_{0 n}|_{r=R} = \underbrace{ \left(\frac{1}{\kappa} + \frac{R \Gamma}{\Dout} \right)^{-1}}%
    _{\kappa_{\text{eff}}} f^\text{in}_{0 n}(R).
    \label{eq:radiation-boundary}
\end{equation}
Eq. \eqref{eq:radiation-boundary} corresponds to the radiation boundary condition we considered recently \cite{muzzopappa2022detecting}, with the effective resistance $\kappa_{\text{eff}}^{-1} = \kappa^{-1} + \kappa^{-1}_\text{outer}$. The dimensionless counterpart of this resistance, which is obtained by multiplying it with $\Din/R$, has been introduced in the previous section. Note that the expression appearing in Eq. \eqref{eq:radiation-boundary} represents the effective resistance in the limit of fast diffusion outside, where $\Dout \gg \Din$ and $\Dout^\text{eff} \to \Dout$ (see Appendix \ref{escape-time}).

\subsubsection{Limiting case II: Vanishing interfacial resistance}
To link the model above to previous work on one-dimensional diffusion across a single interface \cite{bo2021stochastic}, we consider the limit $\kappa \rightarrow \infty$. In this case, the interface is not rate-limiting, and both permeabilities $\kin$ and $\kout$ diverge. However, the interface can still retain a reflective bias if $\kin/\kout \neq 1$, implying that particles can be effectively reflected by crossing the interface and immediately returning to the side they originated from. We recall that the flux across the interface given by Eq. \eqref{eq:robin_boundary} reads $J = \kappa \left( \cin(R_-) - \Gamma \cout(R_+)\right)$. As this flux has to remain finite, $\kappa \rightarrow \infty$ implies 
\begin{equation}
  \cin(R_-) = \Gamma \cout(R_+).
\end{equation}
This boundary condition has previously been studied in the 1D setting \cite{bo2021stochastic}. It is naturally obtained as a limiting case of the model we consider here.

\begin{figure}
\centering
\includegraphics[width=0.99\linewidth]{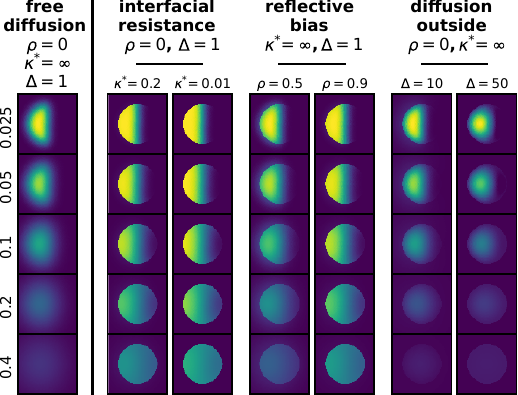}
\captionsetup{skip=6pt}
\caption{Spatial distributions of bleached particles at different times after half-bleaching.}
\label{half-frap-videos}
\justifying \noindent
The spatial distribution of bleached particles was calculated according to Eq. \eqref{eq:integrated-curves}, using the indicated parameter combinations. For increasing reflective bias $\rho$ and increasing interfacial resistance $\kappa^{-1}$, bleached particles tend to stay longer in the inner domain before they escape.
\end{figure}

\subsection{FRAP curves for a semipermeable sphere}
\subsubsection{Integrated half-FRAP curves}
To obtain the concentration of bleached particles after half of the inner domain has been bleached, the propagator derived above is integrated against the initial concentration \footnote{At equilibrium, we may assume that the initial concentration in the outer domain is $1$, implying that the concentration inside is constant and equal to $\Gamma$. Here, we prefer assuming that the initial concentration inside is $\frac{1}{V_\text{bleach}}$ to normalize the half-FRAP curves at $t=0$.}
\begin{equation*}
c_0(x) = \begin{cases}
\frac{1}{V_\text{bleach}} & \textrm{if } x\in \Omega_{\textrm{bleach}}, \\
0 & \textrm{otherwise},
\end{cases}
\end{equation*}
where $\Omega_\text{bleach}$ denotes the bleached half of the inner domain and $V_\text{bleach}$ denotes its volume, leading to:
\begin{align}
    c(x,t) &= \frac{1}{V_\text{bleach}} \int_{\Omega_\text{bleach}} G_t(x,x') \, dx' \notag\\ &= \frac{3}{2\pi R^3} \sum_{\ell,n} e^{-\lambda_{\ell n} t} N_{\ell n} I_{\text{r},\ell n} A_\ell(\theta, \varphi) f_{\ell n}(r),
    \label{eq:integrated-curves}
\end{align}
with the radial integral $I_{\text{r},\ell n}$ and the angular projection $A_\ell(\theta, \varphi)$ given in Eq. \eqref{eq:radial-integral} and Eq. \eqref{eq:angular-integral}, respectively. To obtain integrated recovery curves, we further integrate the expression in Eq. \eqref{eq:integrated-curves} over the bleached and non-bleached half, respectively, yielding
\begin{align}
    c_\text{half,b}(t) &= \frac{\Gamma V_\text{bleach}}{V_\text{eff}} +
    \frac{3}{2\pi R^3} \sum_{\substack{\ell, n \\ \lambda_{\ell n}>0}} e^{-\lambda_{\ell n} t} N_{\ell n}^2 I_{r,\ell n}^2 I^\text{b}_{\text{a},\ell}, \notag\\
    c_\text{half,nb}(t) &=
    \frac{\Gamma V_\text{bleach}}{V_\text{eff}} + \frac{3}{2\pi R^3} \sum_{\substack{\ell, n \\ \lambda_{\ell n}>0}} e^{-\lambda_{\ell n} t} N_{\ell n}^2 I_{r,\ell n}^2 I^\text{nb}_{\text{a},\ell},
    \label{eq:integrated-curves2}
\end{align}
with the angular integrals $I^\text{b}_{\text{a},\ell}$ and $I^\text{nb}_{\text{a},\ell}$ given in Eq.~\eqref{eq:angular-integral2} and Eq.~\eqref{eq:angular-integral3}, respectively, and the effective volume given by $V_\text{eff} = \Gamma V_\text{in} + V_\text{out}$. The first term in the expressions above represents the zero-mode that determines the concentrations in equilibrium, while the sum that runs over all modes $\ell \geq 0$ and all positive eigenvalues $\lambda_{\ell n}$ describes the dynamics of the system.
The FRAP curves for both halves are then obtained via
\begin{align}
    \text{FRAP}_\text{half,b}(t) &= 1-c_\text{half,b}(t), \notag\\ \text{FRAP}_\text{half,nb}(t) &= 1-c_\text{half,nb}(t).
\end{align}
Note that these expressions imply that the signal is observed over the entire hemispheres, owing to their complete axial coverage. This is the case if a microscope without optical sectioning is used. Other scenarios can readily be described by integrating Eq. \eqref{eq:integrated-curves} with different limits or by considering the point spread function of a specific microscope.

\subsubsection{Integrated full-FRAP curve}
To obtain the concentration of bleached particles after the entire inner domain has been bleached, the propagator derived above is integrated against the initial concentration
\begin{equation*}
c_0(x) = \begin{cases}
\frac{1}{V_\text{in}} & \textrm{if } x\in \Omega_{\textrm{in}}, \\
0 & \textrm{otherwise},
\end{cases}
\end{equation*}
leading to:
\begin{align}
    c(x,t) &= \frac{1}{V_\text{in}} \int_{\Omega_\text{in}} G_t(x,x') \, dx' \notag\\ &= \frac{3}{4\pi R^3} \sum_{n} e^{-\lambda_{0 n} t} N_{0 n} I_{\text{r},0 n} f_{0 n}(r),
    \label{eq:integrated-curves-full}
\end{align}
with the radial integral $I_{\text{r},0 n}$ given in Eq. \eqref{eq:radial-integral2}. Note that all modes $\ell>0$ vanish so that the sum runs only over the positive eignvalues $\lambda_{0 n}$ of the isotropic mode $\ell=0$. To obtain the integrated recovery curve, we further integrate the expression in Eq. \eqref{eq:integrated-curves-full} over the entire inner domain, yielding
\begin{equation}
    c_\text{full}(t) = \frac{\Gamma V_\text{in}}{V_\text{eff}} +
    \frac{3}{R^3} \sum_{\substack{n \\ \lambda_{0 n}>0}} e^{-\lambda_{0 n} t} N_{0 n}^2 I_{r,0 n}^2.
    \label{eq:integrated-curves3}
\end{equation}
The FRAP curve for the full domain is then obtained via
\begin{equation}
    \text{FRAP}_\text{full}(t) = 1-c_\text{full}(t).
    \label{eq:integrated-curves4}
\end{equation}
Eqs. \eqref{eq:integrated-curves2}-\eqref{eq:integrated-curves4} and Eqs. \eqref{eq:angular-integral2}-\eqref{eq:angular-integral3} show that full-FRAP curves can be obtained from half-FRAP curves via
\begin{equation}
    \text{FRAP}_\text{full}(t) = \text{FRAP}_\text{half,b}(t) + \text{FRAP}_\text{half,nb}(t) - 1.
\end{equation}
In the limit of free diffusion, the expression for full-FRAP simplifies to the following closed form (Appendix \ref{free-diffusion}):
\begin{equation}
    \text{FRAP}_\text{full}^\text{free}(t) = 1 - c_\text{full}^\text{free}\left(\frac{R}{\sqrt{Dt}}\right),
    \label{eq:integrated-curve-free}
\end{equation}
with
\begin{equation}
    c_\text{full}^\text{free}(u) = \text{erf}(u) - \frac{1}{\sqrt\pi} \left[\left(\frac{3}{u} - \frac{2}{u^3} \right) - e^{-u^2} \left(\frac{1}{u}-\frac{2}{u^3}\right)\right].
    \label{eq:integrated-curve-free2}
\end{equation}

\subsubsection{Dip depth in half-FRAP \label{sec:dip-depth}}
We have previously proposed to use the dip depth as a model-free readout to assess how easily particles can exchange across the interface \cite{muzzopappa2022detecting}, which we defined as
\begin{equation}
    \dip = 1-\text{min}\left(\text{FRAP}_\text{half,nb}(t)\right).
    \label{eq:dip-depth}
\end{equation}
At the local minimum of $\text{FRAP}_\text{half,nb}(t)$, its time derivative vanishes:
\begin{align}
    \partial_t &\text{FRAP}_\text{half,nb}(t) = -\partial_t \int_{\Omega_\text{nb}} c(x,t) dx \notag\\ &= -\int_{\Omega_\text{nb}} \partial_t c(x,t) dx =
    -\int_{\Omega_\text{nb}} \Din \Delta c(x,t) dx = 0.
    \label{eq:time-derivative}
\end{align}
Here, $\Omega_\text{nb}$ refers to the non-bleached half and $c(x,t)$ refers to the concentration in this half. In the last step, the diffusion equation \eqref{eq:diffusion_eq} for the inner sphere was used. Using the divergence theorem, Eq. \eqref{eq:time-derivative} can be written as
\begin{align}
    \partial_t &\text{FRAP}_\text{half,nb}(t) = -\int_{\Sigma_\text{inner}} \Din \partial_r c(x,t) d\sigma \notag\\ &-\int_{\Sigma_\text{interface}} \Din \partial_r c(x,t) d\sigma = J_\text{mix} + J_\text{interface}.
    \label{eq:time-derivative2}
\end{align}
The first integral represents the flux $J_\text{mix}$ across the circular area separating both halves of the inner domain from each other. The second integral represents the flux $J_\text{interface}$ across the interface between the non-bleached half of the inner domain and the surrounding domain. $\Sigma_\text{inner}$ and $\Sigma_\text{interface}$ refer to the respective surfaces. At the local minimum of $\text{FRAP}_\text{half,nb}$, the influx of bleached particles from the bleached half equals the efflux of bleached particles across the interface. The smaller the flux across the interface, the later the local minimum will be reached after the bleach, and the smaller the value at the minimum will be, as more bleached particles will have entered the non-bleached half before flux balance is reached. Accordingly, if escape from the inner domain is increasingly suppressed, through an increasing interfacial resistance or an increasing resistance of the outer domain, the dip depth increases. The link between the dip depth and the different model parameters is quantitatively assessed in the next section.

\begin{figure}
\centering
\begin{tikzpicture}
\node[anchor=south west,inner sep=0] (image) at (0,0) {\includegraphics{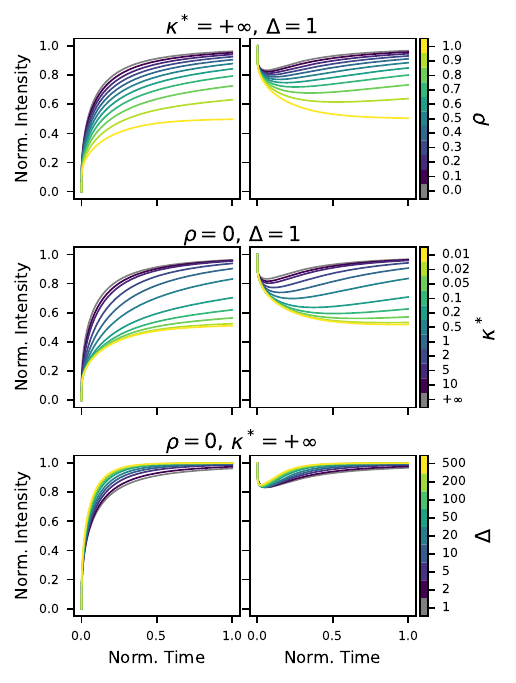}};
\begin{scope}[x={(image.south east)},y={(image.north west)}]
    \node[anchor=west] at (0.0,0.96) {\textbf{(a)}};
    \node[anchor=west] at (0.0,0.66) {\textbf{(b)}};
    \node[anchor=west] at (0.0,0.36) {\textbf{(c)}};
\end{scope}
\end{tikzpicture}
\caption{Integrated half-FRAP curves for different parameter combinations.}
\label{half-frap-parameter-influence}
\justifying \noindent
(a) Increasing reflective bias of the interface $\rho$ slows down the recovery in the bleached half and increases the dip depth in the non-bleached half. (b) Increasing interfacial resistance $\kappa^{-1}$ has similar effects as increasing reflective bias. (c) The ratio of diffusion coefficients $\Delta$ has only milder effects, with increasing $\Delta$ leading to a faster recovery to the equilibrium state. All curves were plotted for $L/R=100$.
\end{figure}

\subsection{Influence of interfacial properties and diffusion coefficients on half-FRAP curves}
We used the equations derived above to plot integrated half-FRAP curves for different parameter combinations. We started from the free diffusion case, $\Gamma=1$, $\rho=0$, $\Delta=1$ and $\kappa \rightarrow \infty$, and varied one parameter at a time (Fig. \ref{half-frap-parameter-influence}). When the interface is biased towards higher reflectivity from the inside by increasing $\rho$, or when the dimensionless permeability of the interface is reduced by decreasing $\kappa^*$, the recovery in the bleached half is slowed down and the dip in the non-bleached half becomes deeper. For increasing $\Delta$, a faster recovery towards the equilibrium state is observed in both halves. 
We next quantified the dip depth for different parameter combinations (Fig. \ref{dip-dependence}). The dip depth increased with increasing $\rho$ and decreasing $\kappa^*$, while the influence of the ratio of diffusion coefficients $\Delta$ was mild in the parameter regime $\Delta \geq 1$, which corresponds to the common case of biomolecular condensates that are more viscous than the surrounding medium \cite{wang2021surface}. When plotting the dip depth against the degree of internal mixing $M$ defined in Eq. \eqref{eq:internal-mixing}, we observed an approximate collapse to a monotonically increasing master curve (Fig. \ref{fig:dip-depth-master-curve}). Thus, the dip depth is a direct readout of internal mixing. We note that the actual values of the dip depth will depend on the imaging and bleaching modality, with the relationship shown here referring to a microscopy setup without optical sectioning and a step-like bleach profile.

\begin{figure}
\centering
\includegraphics{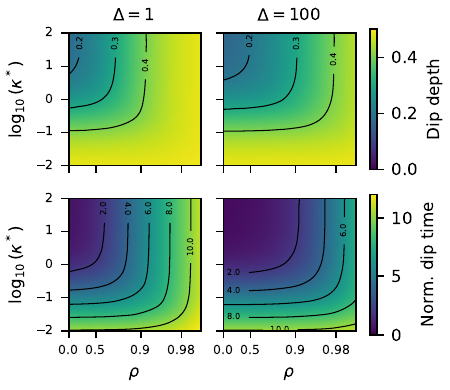}
\caption{Dip depth and normalized dip time.}
\label{dip-dependence}
\justifying \noindent
Quantification of the dip for different combinations of reflective bias $\rho$, interfacial resistance $1/\kappa^*$ and ratio of diffusion coefficients $\Delta$. The dip depth corresponds to the minimum of the recovery curve in the non-bleached half, whereas the normalized dip time is the normalized time where the dip is reached. The axes for $\rho$ are scaled with an arctanh transform.
\end{figure}

\subsection{Information content of integrated curves}
The spatiotemporal evolution of the concentration given in Eqs. \eqref{eq:integrated-curves} and \eqref{eq:integrated-curves-full} provides a more detailed description of the diffusion process than the integrated recovery curves in Eqs. \eqref{eq:integrated-curves2} and \eqref{eq:integrated-curves3}. Nevertheless, integrated curves are commonly used to characterize the fluorescence recovery, as their analysis is computationally less costly. Inspection of Fig. \ref{half-frap-parameter-influence}a,b, Fig. \ref{dip-dependence} and Fig. \ref{fig:dip_M} reveals that variations in $\rho$ and $\kappa^*$ produce qualitatively similar effects on the overall shapes of the curves, the dip times and the dip depths. This observation naturally raises the question of whether these parameters can be uniquely identified from the integrated curves. To address this question, we sought to analyze the information content of the curves, thereby clarifying the identifiability of the model parameters.

\begin{figure}
\centering
\includegraphics{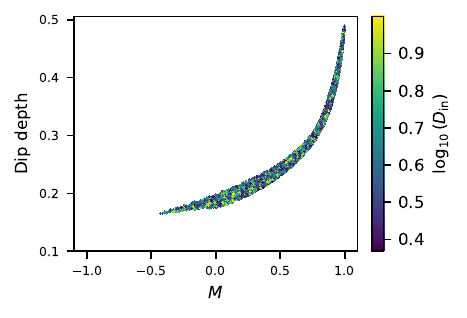}
\caption{Link between dip depth and internal mixing.}
\label{fig:dip_M}
\justifying \noindent
The dip depth seen in the non-bleached half quantifies the degree of internal mixing within the inner domain and is independent of $\Din$. Note that the recovery curves considered here imply the use of a microscope without optical sectioning and assume a step-like bleach profile.
\label{fig:dip-depth-master-curve}
\end{figure}

\subsubsection{Fisher information}

\begin{figure*}
    \centering
    \hspace{-0.6cm}
    \begin{subfigure}[t]{0.49\linewidth}
        \caption{\hspace{1.45cm} Full-FRAP \hspace{2.15cm} Half-FRAP}
        \centering
        \includegraphics[trim={0cm 0cm 0cm 0.49cm}, clip]{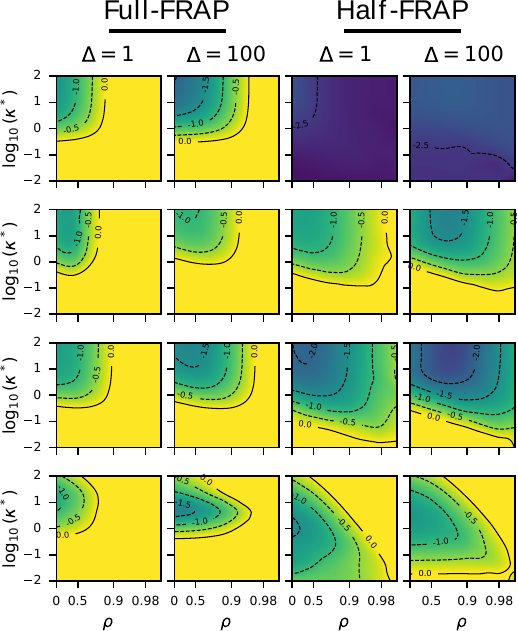}
        \label{fig:FIM_plots}
    \end{subfigure}
    \hfill
    \begin{subfigure}[t]{0.49\linewidth}
        \caption{\hspace{0.55cm} Full-FRAP \hspace{2.15cm} Half-FRAP}
        \centering
        \includegraphics[trim={0cm 0cm 0cm 0.49cm}, clip]{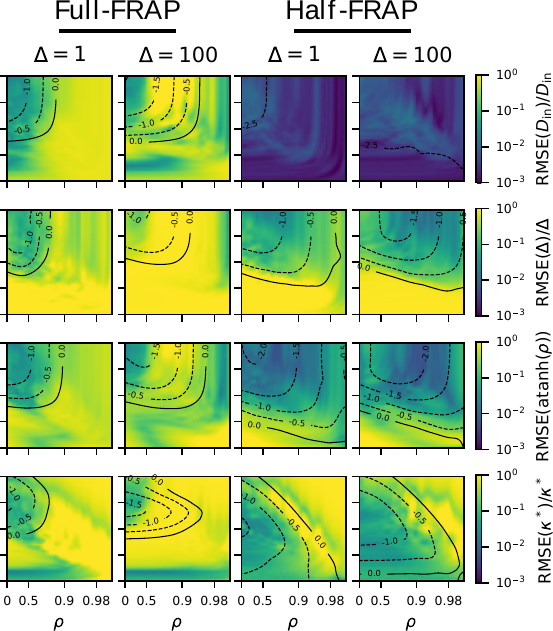}
        \label{fig:MLP_results}
    \end{subfigure}
    \caption{Information content of integrated recovery curves and parameter identifiability.}
    \label{fig:FIM_MLP_combined}
    \justifying \noindent
    Comparison between local parameter identifiability from the Fisher Information (a) and practical parameter identifiability assessed with a Multi-Layer Perceptron (b). Results are shown for full-FRAP and half-FRAP curves generated with $\Din/R^2 = 0.1\,s^{-1}$ and noise level $\sigma=10^{-2}$. Blue/green regions correspond to good parameter identifiability. Level lines in panel b correspond to those of panel a (plotted for comparison). The axes for $\rho$ are scaled with an arctanh transform.
\end{figure*}

We consider the parameter vector $\theta = (\kappa^*, \rho, \Delta, \Din)$ that should be either inferred from an integrated full-FRAP curve $\Sfull(\theta) \in \R^N$ or from a set of integrated half-FRAP curves $\Shalf(\theta) \in \R^{2N}$ that comprises one curve for the bleached half and one for the non-bleached half. Here, $N$ denotes the number of time points. In what follows, $S$ can denote either $\Sfull$ or $\Shalf$. 
The ability to recover parameters $\theta$ from an observation $S(\theta)$ depends on the injectivity of the maps $\Sfull$ and $\Shalf$. If two parameter sets $\theta\neq \theta'$ map to the same curve $S(\theta) = S(\theta')$, the model is non-identifiable and the parameters cannot be uniquely determined.
The situation is more subtle in the case of noisy observations. In this context, local parameter identifiability can be assessed by the Fisher Information Matrix (FIM), which quantifies the sensitivity of observations to parameter perturbations. For a deterministic model $S(\theta)$ observed with additive Gaussian noise of variance $\sigma^2$, the FIM is defined as:
\begin{equation}
\FI(\theta) = \frac{1}{\sigma^2} J_S(\theta)^T J_S(\theta).
\end{equation}
Here, $J_S$ is the Jacobian matrix with entries $[J_S]_{ij} = \frac{\partial S_i}{\partial \theta_j}$.

If the smallest eigenvalue $\lambda_{\min}(\FI(\theta)) = 0$, the parameter vector $\theta$ is not locally identifiable, as there exists a direction $v \in \mathbb{R}^4$ such that $S(\theta + \epsilon v) = S(\theta) + o(\epsilon)$, making the parameters indistinguishable.
For any unbiased estimator $\hat{\theta}$ of $\theta$, the covariance satisfies the Cramér--Rao bound
\begin{equation}
\text{Cov}(\hat{\theta}) \succeq \FI(\theta)^{-1},
\end{equation}
where $\succeq$ denotes the positive semi-definite ordering and where $\FI(\theta)$ is nonsingular. This means the uncertainty in estimating $\theta$ is lower-bounded by the inverse FIM. The eigenvalues of the covariance matrix directly determine the precision of parameter estimates: Larger eigenvalues correspond to poorly identifiable parameter directions.

To visualize identifiability across parameter space, we construct ``identifiability maps'' that display scalar metrics derived from the FIM. 
Assuming that the map $S$ is injective, it is possible to recover any parameter set $\theta$ exactly from a noiseless measurement $S(\theta)$ and to construct an unbiased estimator of the parameters. 
The square root of the diagonal values of the covariance matrix then represent the standard deviations $(\sigma_{\kappa^*}, \sigma_{\rho}, \sigma_{\Delta}, \sigma_{\Din})$ achievable by the best unbiased estimator for each parameter. 
Since the parameters $\kappa^*$, $\Delta$ and $\Din$ can vary by orders of magnitude, we plot the relative standard deviations in Fig. \ref{fig:FIM_plots}.
When $\rho\to 1$, the interface becomes impermeable, leading to a pronounced change in the shape of the integrated recovery curves. To capture this transition more finely, we plot $\mathrm{arctanh}(\rho)$ instead of $\rho$, which spreads out the parameter space near $\rho = 1$ and yields a more uniform Fisher information metric.

We calculated integrated recovery curves $\Sfull$ and $\Shalf$ with a total duration of 120 seconds and a frame rate of 25 frames per second, resulting in a total of $N=3000$ time points per curve.
Since $\Din$ and $R$ only scale the integrated FRAP curves as explained above, we fixed $\Din=0.1$ µm$^2$/s and $R=1$ µm. 
The Jacobian matrices $J_S$ were computed using centered finite differences with an optimized step size. 
We set a noise level $\sigma_{\mathrm{full}}=10^{-2}$ for full-FRAP curves, corresponding to the order of magnitude observed experimentally, and to a larger noise level $\sigma_{\mathrm{half}}=\sqrt{2}\cdot 10^{-2}$ for half-FRAP curves, taking into account that half-FRAP curves are averaged over half as many pixels as full-FRAP curves.

Fig. \ref{fig:FIM_plots} shows the relative standard deviations obtained from the covariance matrix across parameter space. Values of $10^{-2}$ mean that parameters can be recovered with a relative error of 1\%, while values above $1$ ($10^0$) mean that the relative errors are larger than 100\%. Note that these values refer to an idealized situation since they assume that the forward model $S(\theta)$ exactly coincides with reality. Nevertheless, they provide the following insights:
\begin{itemize}
    \item The information content of half-FRAP curves is higher than that of full-FRAP curves, with larger regions of identifiability despite higher noise levels.
    \item The identifiability maps for both full- and half-FRAP show a region extending from the top left corner where all four parameters can be recovered. This region coincides with comparatively low effective resistances of the interface.
    \item The inner diffusion coefficient $\Din$ can be recovered across all the parameter space with half-FRAP, provided that the frame rate and the duration of the experiment can be appropriately chosen. For full-FRAP, $\Din$ can only be recovered in the small region of the parameter space where the effective resistance of the interface is low, reflecting the poor sensitivity of full-FRAP to internal mixing.
\end{itemize}

\subsubsection{Neural Network Recovery}

To corroborate the Fisher information analysis above, we trained a neural network mapping noisy recovery curves to the associated parameters. We generated $L = 100\,000$ synthetic full- and half-FRAP curves by randomly sampling parameters from the following ranges: $\kappa^* \in [10^{-2}, 10^2]$, $\rho\in [0, 1]$, $\Delta\in [1, 100]$, and $D_{\text{in}}\in [10^{-2}, 1]$ µm$^2$/s. A uniform distribution was used for $\rho$, and log-uniform distributions were used for the other parameters, ensuring adequate sampling across several orders of magnitude. For each parameter set, we calculated the corresponding full- and half-FRAP curves using the equations derived above.

We used a simple Multi-Layer Perceptron (MLP) for the inference of the four parameters. Instead of using the full discrete recovery curves as input, we projected them onto the first $Q=64$ principal components extracted from the $L$ sampled curves in $\R^N$. This representation makes the results largely insensitive to the time-discretization and enhances the stability against noise. Hence, the MLP can be seen as a function $M_w:\R^{Q}\to \R^4$ parameterized by weights $w$. In our implementation, we used an over-parameterized network with $|w|=2.7\cdot 10^6$ parameters. The network was trained by solving:
\begin{equation}
    \inf_{w} \mathbb{E}\sum_{i=1}^L\left[ \mathrm{loss}(M_w(\Pi_Q(S_i + e)) , \theta_i ) \right], 
\end{equation}
where $\theta_i$ are the parameters used to generate the curve $S_i$, $e\sim \mathcal{N}(0,\sigma^2 \mathrm{I}_N)$ is additive white Gaussian noise, $\Pi_Q:\R^N\to \R^Q$ is the projector on the $Q$ principal components, and the expectation is taken with respect to the additive noise. The loss is defined as the sum of relative errors for $\kappa$, $\Delta$ and $\Din$ and the absolute error for $\rho$:
\begin{equation}
\mathrm{loss}(\hat \theta, \theta)= \frac{|\widehat{\Din} - \Din|}{|\Din|} + \frac{|\widehat{\Delta} - \Delta|}{|\Delta|} + \frac{|\widehat{\kappa^*} - \kappa^*|}{|\kappa^*|} + |\hat \rho - \rho|.
\end{equation}
The network estimation can be biased. To evaluate its estimation precision and compare it to that of an unbiased estimator, which is assessed in Fig. \ref{fig:FIM_plots} via the Fisher information, we consider the root-mean-square error, $\textrm{RMSE} = \sqrt{\textrm{bias}^2 + \textrm{variance}}$, where the bias and variance are estimated through Monte Carlo sampling of $M_w(S_i + e)$ over $1000$ realizations of the noise $e$ with standard deviation $\sigma=10^{-2}$ for full-FRAP and $\sigma = \sqrt{2}\cdot 10^{-2}$ for half-FRAP.

The estimation errors are displayed in Fig. \ref{fig:MLP_results}. The plots resemble those in Fig. \ref{fig:FIM_plots}, suggesting that the trained MLP can nearly reproduce a minimal variance estimator. 
In certain regions of the parameter space, the network even outperforms the best unbiased estimator, which may seem unexpected but can be explained by the bias-variance tradeoff: As a biased estimator, the network may achieve a lower RMSE locally -- even when the Fisher information is low.
Again, we see that predictions based on half-FRAP curves are overall better than those based on full-FRAP. The results presented in this section suggest that it is possible to construct practical near-optimal estimators by training neural networks on theoretical recovery curves.

\subsection{Mapping between phase-separating particles in a spherical condensate and non-interacting particles in a semipermeable spherical domain}
The model studied in the sections above considers the diffusion of non-interacting particles in the presence of a semipermeable interface. However, many biomolecular condensates are thought to be formed by associative phase separation driven by attractive intermolecular interactions. To map the parameters used in the model above to those in such a system, we consider a sticky polymer model that has been used to describe different types of condensates \cite{choi2020physical, spruijt2013linear, rubinstein2001dynamics}, including coacervates formed by polyelectrolytes \cite{spruijt2013linear} that we experimentally study below. Phase-separating molecules are represented as polymers containing sticky patches that can interact with each other. For electrostatic interactions, the interaction energy is given by a Yukawa potential according to
\begin{equation}
    \frac{E}{kT} = -n \frac{z^{2}l_{\text{B}}}{d} e^{-d/l_{\text{D}}} \approx -n z^{2}l_{\text{B}} \left(\frac{1}{d} - \frac{1}{l_{\text{D}}}\right).
    \label{eq:energy_sticky_polymer}
\end{equation}
Here, $kT$ is the thermal energy, $z$ is the charge valency per sticky patch, $d$ is the effective contact separation between patches, $n$ accounts for the number of exchanged ion pairs during each rearrangement \cite{spruijt2013linear, hamad2018linear}, and $l_\text{B}$ and $l_\text{D}$ are the Bjerrum length and the Debye length that read
\begin{align}
    l_{\text{B}} &= \frac{e^{2}}{4\pi\epsilon_0\epsilon_\text{r}kT}, \label{eq:bjerrum_length}\\ 
    l_{\text{D}} &= \frac{1}{\sqrt{8 N_{\text{A}} l_{\text{B}}
    I(c_\text{salt})}}.
    \label{eq:debye_length}
\end{align}
Here, $e$ is the elementary charge, $\epsilon_0$ is the vacuum permittivity, $\epsilon_\text{r}$ is the relative permittivity of the solution, $N_{\text{A}}$ is the Avogadro constant, and $I(c_\text{salt})$ is the ionic strength for the salt concentration $c_{\text{salt}}$. For a divalent salt like MgCl\textsubscript{2} that we use below, $I(c_\text{salt})=3c_\text{salt}$ (with $c_\text{salt}$ given in mmol/l). Based on this description, the viscosity of the condensate can be written as \cite{spruijt2013linear,rubinstein2001dynamics}
\begin{equation}
    \eta(E) = \eta_0 e^{-E/kT}.
    \label{eq:eta_sticky_polymer}
\end{equation}
Note that the interaction energy $E$ is negative if intermolecular interactions are attractive, so that the viscosity increases with the strength of attractive interactions. Equation \eqref{eq:eta_sticky_polymer} implies that the viscosity is determined by the binding energy of individual sticky patches, as stress relaxation occurs via rearrangements of individual polymer segments. Diffusion coefficients are generally expected to scale with the inverse viscosity. More specifically, different diffusion mechanisms can be envisioned, which have been referred to as "walking" and "hopping" \cite{rapp2018mechanisms}. In the first case, polymers "walk" through the condensate by sequentially breaking and reforming interactions between individual sticky patches as they diffuse. In the second case, polymers "hop" through the condensate by breaking interactions between multiple sticky patches at once, rapidly diffusing over a certain distance, and reforming the broken interactions afterwards. To account for both possibilities, we multiply the energy term with an additional parameter $N_\eta$, which equals $N_\eta = 1$ for "walking" and $N_\eta > 1$ for "hopping". We then obtain the following expression for the diffusion coefficient inside of the condensate:
\begin{equation}
    \Din(E) = D_0 e^{N_\eta E/kT}.
    \label{eq:din_sticky_polymer}
\end{equation}
Assuming that $\Dout = D_0$ outside of the condensate, we obtain the following expression for the ratio of diffusion coefficients $\Delta$: 
\begin{equation}
    \Delta(E) = \frac{\Dout}{\Din(E)} = e^{-N_\eta E/kT}.
    \label{eq:delta_sticky_polymer}
\end{equation}
Accordingly, $\Delta$ is expected to be larger than one and to increase with the strength of attractive interactions among phase-separating molecules. The partition coefficient in the sticky polymer model can be written as
\begin{equation}
    \Gamma(E) = e^{-\Delta\mu/kT},
    \label{eq:gamma_sticky_polymer}
\end{equation}
with $\Delta\mu=\mu_\text{in}-\mu_\text{out}$ representing the difference between the chemical potentials inside and outside of the condensate. Assuming that each phase-separating molecule contributes the interaction energy of $N_\Gamma$ sticky patches to the chemical potential, which means that bringing the molecule from the dilute to the dense phase results in $N_\Gamma$ additional interations between sticky patches, Eq. \eqref{eq:gamma_sticky_polymer} becomes
\begin{equation}
    \Gamma(E) = \Gamma_0 e^{-N_\Gamma E/kT}.
    \label{eq:gamma_sticky_polymer2}
\end{equation}
For simplicity, we did not explicitly consider entropic contributions from phase-separating molecules or counterions. According to Eq. \eqref{eq:gamma_sticky_polymer2}, $\Gamma$ is expected to increase with the strength of attractive interactions among phase-separating molecules. Furthermore, $\Gamma$ is larger than unity as phase-separating scaffold molecules are enriched in the condensate. Using Eq. \eqref{eq:rho-definition}, Eq. \eqref{eq:delta_sticky_polymer} and Eq. \eqref{eq:gamma_sticky_polymer2}, the reflective bias of the interface reads
\begin{equation}
    \rho(E) = \frac{\Gamma_0 e^{-E (N_\Gamma-\frac{1}{2}N_\eta)/kT}-1}{\Gamma_0 e^{-E (N_\Gamma-\frac{1}{2}N_\eta)/kT}+1}.
\end{equation}
For $\Gamma_0 \geq 1$ and $N_\Gamma \geq N_\eta/2$, the reflective bias is larger or equal than zero, which means that the interface is more reflective from the inside than from the outside. For $E \rightarrow -\infty$, the reflective bias converges to unity, which means that the interface becomes fully reflective from the inside. \\
The relationship between the interfacial resistance and the intermolecular interaction energy is less clear, because different microscopic scenarios can be envisioned. Interfacial resistance can for example arise from an additional energy barrier that particles have to cross at the interface, or from a mobility minimum at the interface \cite{hubatsch2025transport}. These phenomena may be influenced by long-range electrostatic effects at the interface \cite{van2024probing, majee2024charge}, which cannot be readily predicted based on the simple model considered here.\\
We conclude that the parameters used in the diffusion model for non-interacting particles above are not independent of each other when the model is used to describe associative phase separation, and that only part of the parameter space is relevant for this scenario. For the sticky polymer model considered here, the respective parameter space corresponds to $\Delta \geq 1$, $\Gamma \geq 1$ and $\rho \geq 0$, while $\kappa$ may assume any positive value. We note that the relevant parameter space as well as the specific expressions linking the different parameters may change if other microscopic models are used, which account for example for repulsive interactions or for conformational changes of phase-separating molecules in the individual phases.

\subsection{Interfacial properties and internal mixing of PLL-HA coacervates}
We applied the theoretical framework developed above to coacervates formed by poly-lysine (PLL) and hyaluronic acid (HA), a tunable model system for phase-separated biomolecular condensates \cite{muzzopappa2022detecting, park2020dehydration,yewdall2021coacervates}. In particular, the intermolecular interaction strength in the PLL-HA system can be tuned by addition of salt, which screens the electrostatic interactions between positively charged PLL and negatively charged HA molecules. The salt dependence of the interaction energy in Eq. \eqref{eq:energy_sticky_polymer} enters through the Debye length, which is proportional to $1/\sqrt{c_\text{salt}}$. As described below, we experimentally measured the salt-dependent partition coefficient $\Gamma$ of PLL-HA coacervates via calibrated fluorescence microscopy imaging and the ratio of diffusion coefficients $\Delta$ via fluorescence correlation spectroscopy (FCS), which allowed us to determine the reflective bias of their interface. We then used quantitative half-FRAP in conjunction with the known values of $\Gamma$ and $\Delta$ to determine interfacial resistances, yielding a comprehensive description of molecular transport across the coacervate interface.

\begin{figure}[h]
\centering
\includegraphics[width=1\linewidth]{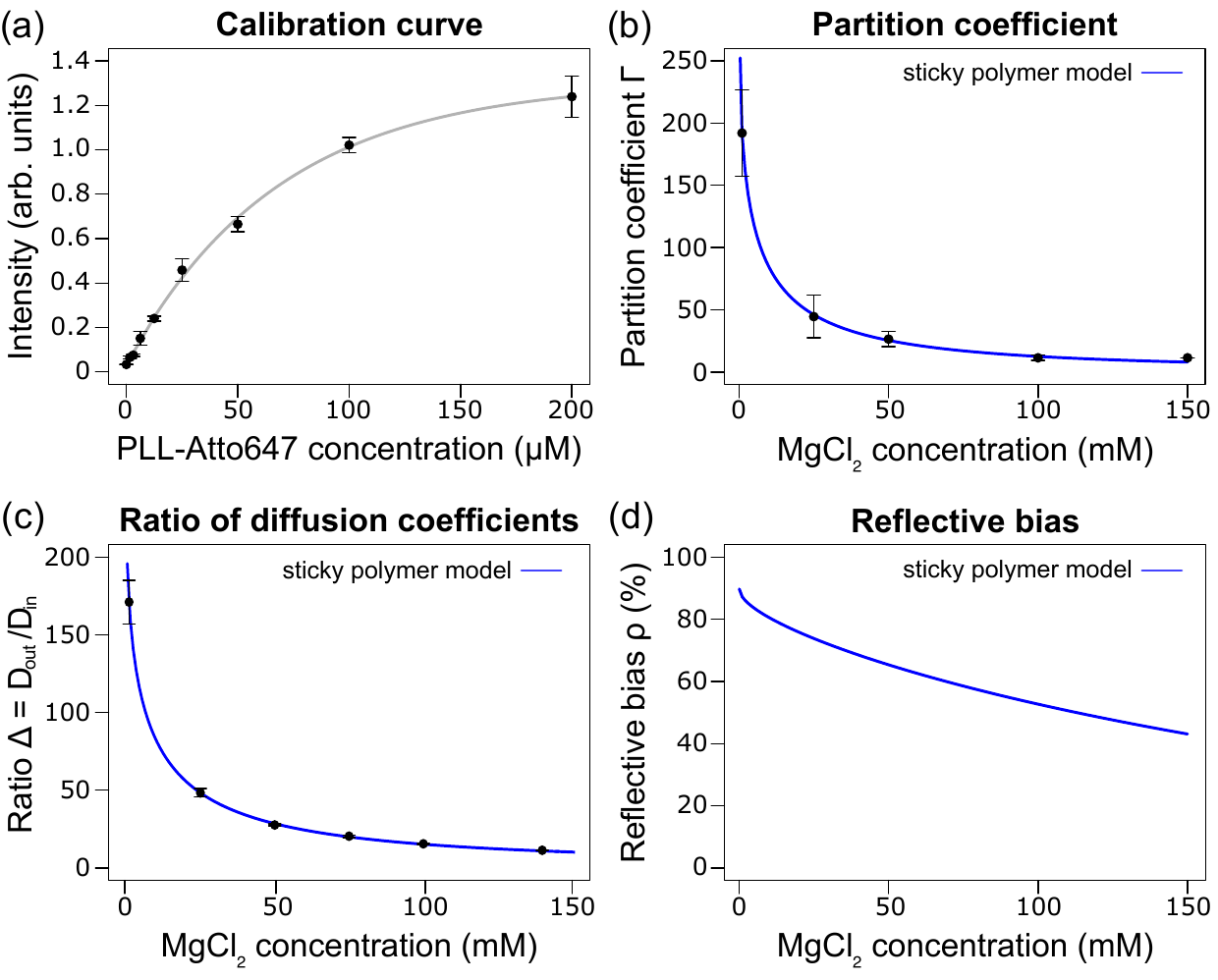}
\caption{Partition coefficients and ratios of diffusion coefficients for PLL-HA coacervates at different salt concentrations.}
\label{pll_ha_gamma_eta}
\justifying \noindent
(a) Calibration curve relating fluorescence intensities in microscopy images to PLL-Atto647 concentrations. (b) Partition coefficients $\Gamma$ obtained from the analysis of microscopy images of PLL-HA coacervates (black). The blue line represents the fit with Eq. \eqref{eq:gamma_sticky_polymer2}. (c) Ratio of diffusion coefficients obtained from FCS measurements conducted in the center of the coacervates and outside of the coacervates. The blue line represents the fit with Eq. \eqref{eq:delta_sticky_polymer}. (d) Reflective bias $\rho$ obtained from the partition coefficients and ratios of diffusion coefficients in the previous panels.
\end{figure}

\subsubsection{Reflective bias of the PLL-HA coacervate interface}
We first sought to determine the equilibrium partition coefficient $\Gamma$ for PLL-HA coacervates at different salt concentrations based on confocal microscopy images. We imaged Atto647-PLL solutions at different concentrations to establish the relationship between fluorescence intensity and concentration. As expected, the relationship is linear for low concentrations and becomes non-linear for higher concentrations, probably due to quenching effects (Fig. \ref{pll_ha_gamma_eta}a). We subsequently used this calibration curve to convert the intensities inside and outside of the coacervates into concentrations, yielding the partition coefficients shown in Fig. \ref{pll_ha_gamma_eta}b. As expected from Eq. \eqref{eq:gamma_sticky_polymer2}, the partition coefficient decreases with increasing salt concentrations as electrostatic interactions are progressively screened so that less of the entropic penalty arising from demixing can be compensated and less molecules are concentrated in the coacervates.
We next sought to determine the ratio of diffusion coefficients outside and inside of the PLL-HA coacervates at different salt concentrations. To this end, we conducted FCS measurements at the center of the coacervates and outside of the coacervates (Fig. \ref{pll_ha_fcs}). Both measurements were conducted far away from the coacervate interface to minimize potential contributions from particles encountering the interface. The resulting autocorrelation curves could be adequately fitted with the diffusion model in Eq. \eqref{eq:fcs}. As expected from Eq. \eqref{eq:delta_sticky_polymer}, the ratio of diffusion coefficients $\Delta$ decreases with increasing salt concentrations (Fig. \ref{pll_ha_gamma_eta}c), as electrostatic interactions that increase the friction among molecules in the coacervates are progressively screened. We globally fitted the measured partition coefficients and ratios of diffusion coefficients to the predictions from the sticky polymer model described above (blue lines in Fig. \ref{pll_ha_gamma_eta}b,c), yielding good fits with an energy $N_\Gamma E \sim -5.5 \, kT$ (in the absence of MgCl\textsubscript{2}) and $N_\eta \sim N_\Gamma$.
Having determined the partition coefficients and ratios of diffusion coefficients for PLL-HA coacervates at different salt concentrations, we derived the reflective bias of the interface $\rho$ (Fig. \ref{pll_ha_gamma_eta}d) and the effective reflectivities from both sides. The reflective bias in the absence of MgCl\textsubscript{2} amounted to $\rho \sim 90\%$, which means that particles encountering the interface from the inside are reflected much stronger than particles encountering the interface from the outside. With increasing salt concentrations, $\rho$ decreased and reached roughly 50\% at 150 mM MgCl\textsubscript{2}. These results show that the interface of PLL-HA coacervates exhibits a strong reflective bias, preferentially reflecting particles from the inside. The effective reflectivities $\rhoin$ and $\rhoout$ defined in Appendix \ref{reflective-bias} are listed in Table \ref{fit-parameters}. They quantify the reflected fraction of the incident flux when interfacial transport is not diffusion-limited, i.e., if $1/\kappa^* \gg 1$, accounting for both direct reflection and repeated transmission-return events. In the absence of MgCl\textsubscript{2}, the effective reflectivity from the inside is $\rhoin \sim 94\%$, while that from the outside is $\rhoout \sim 6\%$. At 150 mM MgCl\textsubscript{2}, $\rhoin \sim 77\%$ and $\rhoout \sim 23\%$. These values imply that a substantial fraction of the flux encountering the interface from the inside is effectively reflected, leading to suppressed particle escape and therefore longer retention of particles within the coacervate.

\begin{figure}
\centering
\includegraphics[width=0.75\linewidth,trim={1.75cm 0.4cm 1.75cm 1.75cm}]{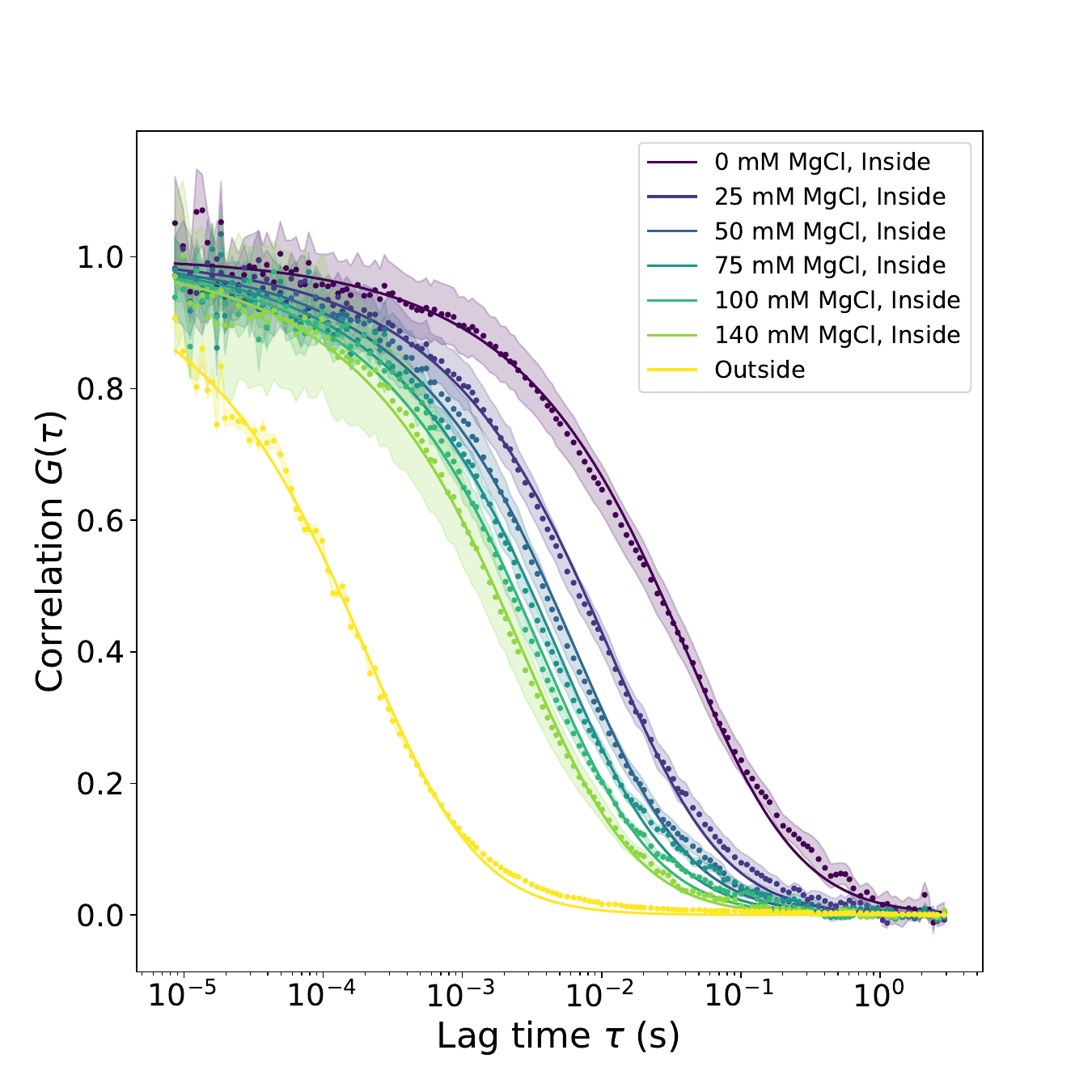}
\caption{Fluorescence correlation spectroscopy inside and outside of PLL-HA coacervates.}
\label{pll_ha_fcs}
\justifying \noindent
Autocorrelation curves for PLL-HA condensates under the indicated conditions. Curves were normalized with respect to the apparent particle number $N$ obtained from fitting. Lines are fits to the diffusion model in Eq. \eqref{eq:fcs}.
\end{figure}

\subsubsection{Interfacial resistance of PLL-HA coacervates}
We next sought to determine the interfacial resistance of PLL-HA coacervates based on half-FRAP experiments. To this end, we prepared PLL-HA coacervates at different MgCl\textsubscript{2} concentrations and spiked in a small amount of fluorescent PLL-FITC to label them. We then bleached half of the coacervates and fitted the recovery curves with the model derived above (Fig. \ref{pll_ha_half_frap}a,b), fixing the values for the partition coefficient $\Gamma$ and the ratio of diffusion coefficients $\Delta$ to the values determined above (Fig. \ref{pll_ha_gamma_eta}). We globally fitted the curves for the bleached and the non-bleached half using the same value for $\kappa^*$ but allowing for different diffusion times $\tau_\text{D}$, accounting for the possibility that not exactly half of the coacervates were bleached and that diffusion times may therefore slightly vary. We refrained from converting diffusion times into absolute diffusion coefficients $\Din$, which was not the goal of the half-FRAP experiments and which can in general be challenging \cite{mazza2012benchmark, erdel2011dissecting}. Note that the interfacial resistance is linked to the shape of the recovery curves, which is independent of its scaling along the time axis via $\Din$ (see sections \ref{sec:reparametrization} and \ref{sec:dip-depth}). The fit results for $1/\kappa^*$ are plotted in Fig. \ref{pll_ha_half_frap}c. The dimensionless interfacial resistance decreases with increasing salt concentrations, consistent with the interpretation that it arises from electrostatic interactions established by the molecules in the coacervates. We fitted the dimensionless interfacial resistance with an exponential of the form

\begin{figure}
\centering
\includegraphics[width=1\linewidth]{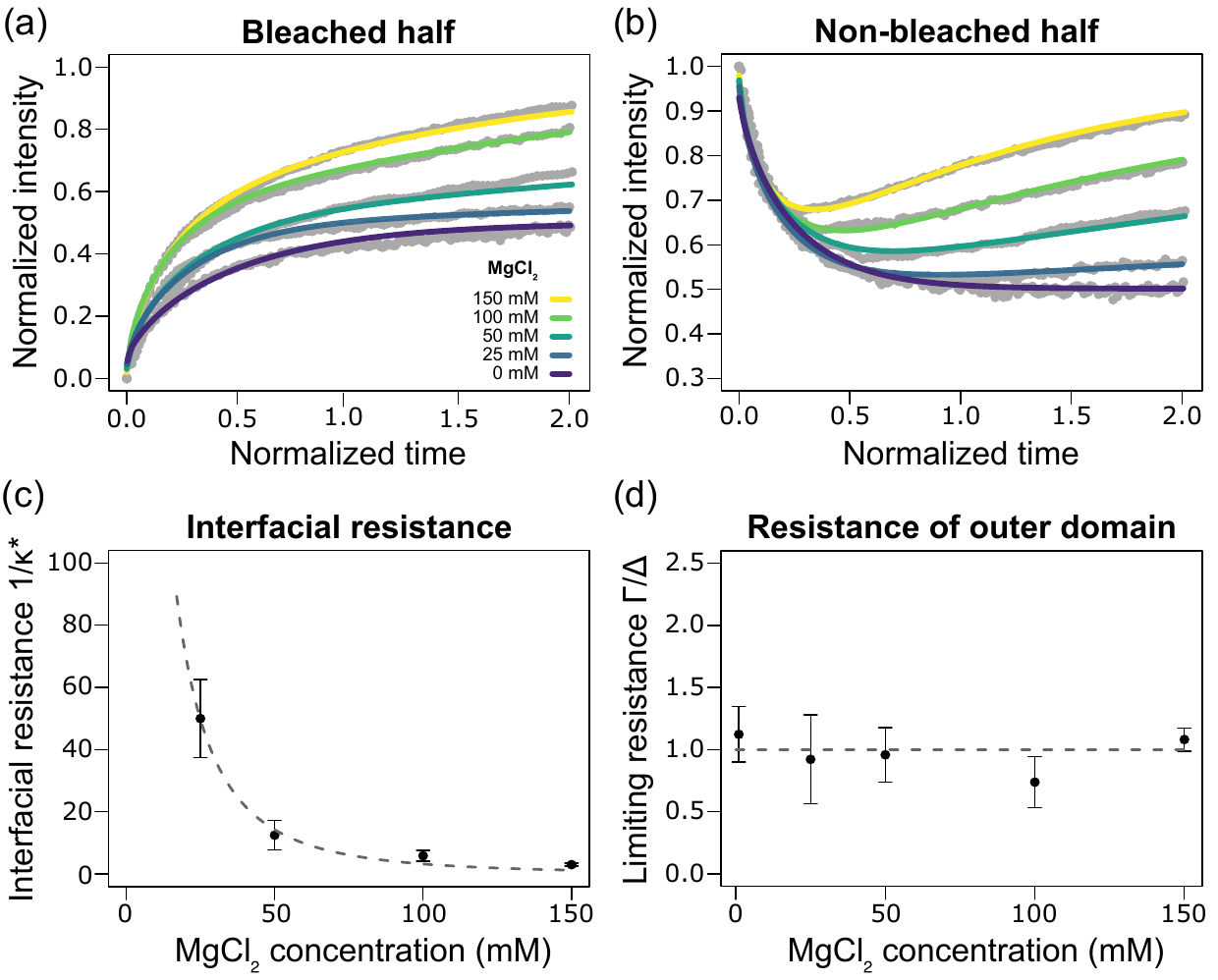}
\caption{Interfacial resistance determined by half-FRAP}
\label{pll_ha_half_frap}
\justifying \noindent
(a,b) Half-FRAP curves of PLL-HA coacervates fitted with the theoretical model derived above. The respective fit results are given in Table \ref{fit-parameters}. (c) Dimensionless interfacial resistance $1/\kappa^* = \Din / \kappa R$ obtained from the half-FRAP fits in the previous panels. The dashed line represents an exponential decay with the interaction energy being fixed to the value obtained from the fits in Fig. \ref{pll_ha_gamma_eta} and $N_\kappa \sim 2.5 N_\Gamma$. (d) Dimensionless resistance of the outer domain in the quasi steady state-limit.
\end{figure}

\begin{equation}
    \frac{1}{\kappa^*(E)} = \frac{1}{\kappa^*_0} e^{-N_\kappa E/kT},
    \label{eq:kappa-fit}
\end{equation}

where we used the relationship $N_\eta E \sim -5.5 \, kT$ determined above so that we could estimate the ratio between $N_\kappa$ and $N_\eta$. We obtained a good fit with $N_\kappa \sim 2.5 N_\eta$ and $1/\kappa^*_0 \sim 0.005$ (dashed line in Fig. \ref{pll_ha_half_frap}c), indicating that molecules crossing the interface must overcome an effective energy barrier that is roughly 2.5-times higher than the barrier $N_\eta E$ associated with a diffusive "hop" within the condensate or the energy difference $N_\Gamma E$ between a molecule residing in the condensate and a molecule in the dilute phase. This barrier could be due to a transient state in which molecules establish fewer electrostatic interactions than they do on average in either phase, or due to electrostatic constraints associated with the inhomogeneous environment at the interface, such as interfacial electric fields \cite{van2024probing, majee2024charge}.

\noindent
\begin{table}[b]
\raggedright
\begin{tabular}{|c||c|c|c|c|c|}
\hline
\makebox[6em][c]{%
  \diagbox[width=6em,height=5em]{\raisebox{0.2ex}{Parameter}}{\shortstack{\\ \\ MgCl\textsubscript{2}}}
}
&\makebox[3.3em]{0 mM}&\makebox[3.3em]{25 mM}&\makebox[3.3em]{50 mM}&\makebox[3.3em]{100 mM}&\makebox[3.3em]{150 mM}\\\hline\hline
$\Gamma$ & 192 ± 35 & 45 ± 17 & 26 ± 6 & 11 ± 2 & 11 ± 1 \\
$\Delta$ & 171 ± 14 & 48 ± 3  & 28 ± 1 & 16 ± 3 & 11 ± 1 \\
$1/\kappa^*$ & $\geq$ 200 & 50 ± 13 & 13 ± 5 & 6 ± 2 & 3 ± 1 \\
\hline
$\rho\,(\%)$ & 88 ± 2 & 74 ± 9 & 66 ± 7 & 46 ± 8 & 54 ± 4 \\
$\rhoin\,(\%)$ & 94 ± 1 & 87 ± 4 & 83 ± 3 & 73 ± 4 & 77 ± 2 \\
$\rhoout\,(\%)$ & 6 ± 1 & 13 ± 4 & 17 ± 3 & 27 ± 4 & 23 ± 2 \\
\hline
$M\,(\%)$ & $\geq$ 99 & 99 ± 1 & 97 ± 4 & 93 ± 7 & 87 ± 8 \\
Exploration & $\geq$ 200 & 51 ± 13 & 14 ± 5 & 7 ± 2 & 4 ± 1 \\
\hline
\end{tabular}
\caption{Interfacial properties of PLL-HA coacervates\vspace{0.05cm}\\\small Parameters describing the interfaces of PLL-HA coacervates were determined based on a combination of fluorescence microscopy imaging, FCS and half-FRAP. In the absence of MgCl\textsubscript{2}, only an upper limit for $\kappa^*$ could be determined. The last row indicates how many rounds of diffusive exploration in the inner domain particles typically undergo before escaping.}
\label{fit-parameters}
\belowcaptionskip=\skip0
\vspace*{-5pt}
\end{table}
To put the value of the dimensionless interfacial resistance $1/\kappa^*$ into context, we calculated the dimensionless outer-domain resistance $1/\kappa^*_\text{outer}$ in the steady state-limit. It compares diffusive mixing in the inner domain to diffusive escape in the outer domain. As shown in Fig. \ref{pll_ha_half_frap}d, $1/\kappa^*_\text{outer}$ is smaller than the interfacial resistance and is independent of the salt concentration, which indicates that particle escape from PLL-HA coacervates is mainly limited by the interfacial resistance rather than the resistance of the outer domain, especially for low MgCl\textsubscript{2} concentrations. We also calculated the degree of internal mixing, $M$, and the effective resistance $1/\kappa^*_\text{eff} = 1/\kappa^* + /\kappa^*_\text{outer}$, which indicates how many rounds of exploration particles approximately undergo in the inner domain before escaping (Table \ref{fit-parameters}). We found that coacervates at all MgCl\textsubscript{2} concentrations were preferentially internally mixed, with $M \geq 80\%$. In the absence of MgCl\textsubscript{2}, particles explored the inner domain on average $\geq 200$-times before escaping. In the presence of 150 mM MgCl\textsubscript{2}, they explored it $\sim 4$-times. Accordingly, there is a pronounced salt-dependence of preferential internal mixing, with coacervates at low salt concentrations that are deep in the two-phase regime exhibiting a strong preference for internal mixing and coacervates close to the coexistence line between the one- and the two-phase regime exhibiting only a weaker preference for internal mixing. This difference has a profound impact on target search processes, as particles explore the interior of the coacervates more or less exhaustively. Moreover, a strong preference for internal mixing leads to a quasi-isolated pool of particles that repeatedly revisit the same locations, making it possible to selectively differentiate this pool, for example by transient chemical modifications whose lifetime is shorter than the escape time. Separating the time scales for internal mixing and exchange thus provides a physical handle for regulating the biochemical and dynamical properties of the system.
\\

\justifying
\subsection{Link between intermolecular interaction energy and dip depth}
Having determined the salt-dependence of the partition coefficient $\Gamma$, the ratio of diffusion coefficients $\Delta$ and the dimensionless interfacial resistance $1/\kappa^*$ for PLL-HA coacervates, we revisited the link between the dip depth and the energetics of the PLL-HA system. Using the expressions for $\Gamma(E)$ in Eq. \eqref{eq:gamma_sticky_polymer2}, $\Delta(E)$ in Eq. \eqref{eq:delta_sticky_polymer} and $1/\kappa^*(E)$ in Eq. \eqref{eq:kappa-fit}, we plotted the theoretical dip depth as a function of the exchange cohesive energy $N_\Gamma E$, which corresponds to the energy difference between a molecule in the condensate and a molecule in the dilute phase (Fig. \ref{dip-energy}a). We obtained a sigmoidal relationship that resembles the one we previously observed when plotting the dip depths against the apparent interfacial energies \cite{muzzopappa2022detecting}. The inflection point occurs at an exchange cohesive energy of $\sim 2.5\, kT$, while it coincided with a much smaller apparent interfacial energy of $\sim 0.03\, kT$ \cite{muzzopappa2022detecting}. Plotting both energies with respect to each other using our previously measured apparent interfacial energies for the PLL-HA system reveals an approximately quadratic relationship, with the latter being about two orders of magnitude smaller than the exchange cohesive energies (Fig. \ref{dip-energy}b). These values suggest that molecules in the interfacial layer are connected to each other with very few interactions per molecule compared to molecules within the interior of the coacervate, leading to ultralow interfacial tensions but comparatively high viscosities.

\begin{figure}
\centering
\includegraphics[width=1\linewidth]{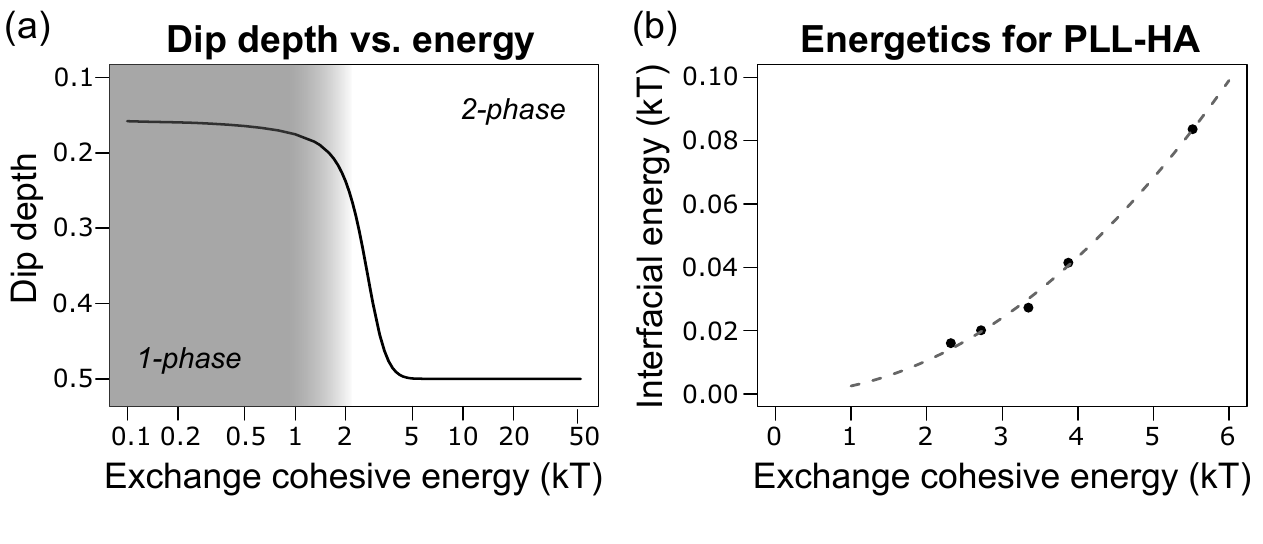}
\caption{Link between dip depth and energetics}
\label{dip-energy}
\justifying \noindent
(a) Theoretical dip depths based on the relationships between $\Gamma$, $\Delta$, $\kappa^*$ and the exchange cohesive energy. The gray area denotes the 1-phase regime where no PLL-HA coacervates are observed. Dip depths are calculated for a microscope without optical sectioning and a step-like bleach profile. (b) Cohesive exchange energies determined in this work for PLL-HA coacervates at different MgCl\textsubscript{2} concentrations versus apparent interfacial energies determined previously \cite{muzzopappa2022detecting}.
\end{figure}

\justifying
\section{\label{sec:level3}Discussion}
We present a quantitative half-FRAP framework and use it to study the interfacial properties of PLL-HA coacervates, a well-defined and tunable model system for phase-separated biomolecular condensates \cite{park2020dehydration,yewdall2021coacervates}. We first solve the diffusion problem in a spherical semipermeable domain by spectral decomposition of the diffusion operator. We find that the shape of the recovery curve and the previously introduced dip depth \cite{muzzopappa2022detecting} depend on the degree of internal mixing, which is mainly controlled by the reflective bias of the interface and the interfacial resistance. The diffusion coefficient in the condensate and the radius of the condensate set the characteristic time and length scales of the system and therefore control the scaling of the curves along the time axis. Thus, they determine the recovery time but not the shape of the recovery curve or the dip depth. By evaluating the information content of the integrated recovery curves in both halves, we find that all parameters can be obtained if the interface is permeable enough. Future work will assess how robustly parameters can be inferred in the presence of drift, acquisition-induced photobleaching, deviations from a step-like bleach profile, and for non-spherical geometries. The influence of binding interactions with immobile scaffolds, which may be present in some classes of condensates, also merits further investigation.

We apply the quantitative half-FRAP model to bleach experiments on PLL-HA coacervates at different MgCl\textsubscript{2} concentrations, informing the analysis by quantitative confocal imaging and FCS to increase its robustness. This allows us to determine both the reflective bias and the resistance of the coacervate interface as a function of salt concentration. We find that coacervate interfaces exhibit a substantial reflective bias, with particles being reflected more strongly from the inside, and a substantial interfacial resistance. Both quantities decrease upon addition of MgCl\textsubscript{2}, with the interfacial resistance decaying more steeply. Near the critical MgCl\textsubscript{2} concentration where coacervates dissolve, the interfacial resistance almost vanishes while the reflective bias remains at $\sim 50\%$. These results indicate that particle escape is hindered and coacervates are preferentially internally mixed at all MgCl\textsubscript{2} concentrations, albeit to different degrees. Close to the coexistence line between the one- and two-phase regimes, where the interfacial resistance is low, particle escape is primarily limited by the inner reflectivity of the interface. Deep within the two-phase regime, both the inner interfacial reflectivity and the interfacial resistance restrict particle escape, isolating coacervates more effectively from the surrounding medium. Crucially for biological systems, this physical isolation enables localized biochemical memory: a given site in the condensate is repeatedly revisited by the same molecules, which can become functinoally distinct from those in the surrounding environment if they transiently adopt particular conformations or acquire chemical modifications with half-lifes comparable to the exchange time. Moreover, higher interfacial reflectivties suppress transient excursions into the surrounding medium, which is functionally relevant if molecules can be sequestered or modified in the surrounding domain. This can be especially important under non-equilibrium conditions, where external stimuli acting on the surrounding medium may be buffered or propagated by the condensate. We anticipate that these principles extend to various systems, including biomolecular condensates found in cells as well as synthetic \textit{in vitro} systems, in which chemical modulation of intermolecular interaction energies may enable the design of tunable condensates for controlled drug release.

\begin{figure}
\centering
\includegraphics[width=\linewidth]{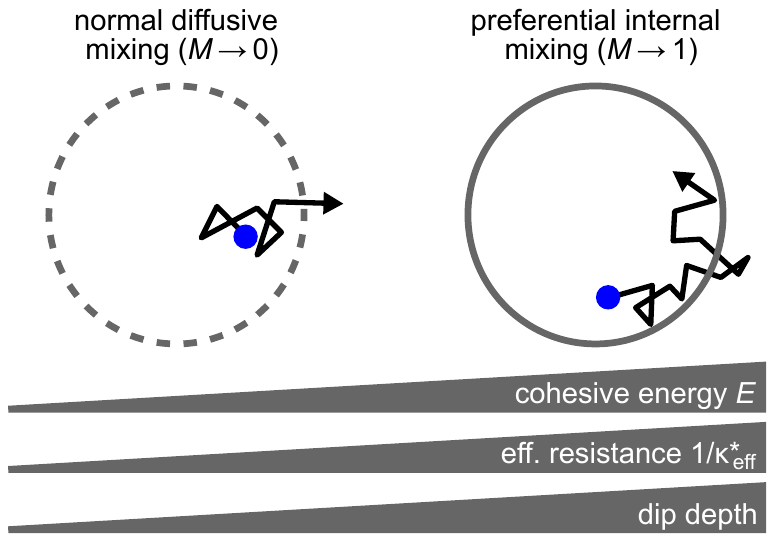}
\vspace{0.5pt}
\caption{Condensates with distinct dynamic signatures distinguished by half-FRAP}
\label{conclusion}
\justifying \noindent
Biomolecular condensates can display distinct dynamic properties. Condensates with larger effective resistances, which are linked to stronger intermolecular interactions, contain a quasi-isolated pool of particles that exhibits preferential internal mixing, which has implications for target search processes and localized biochemical memory. The dip depth in half-FRAP curves reports on preferential internal mixing.
\end{figure}

\begin{acknowledgments}
We thank Frédéric de Gournay for support and helpful discussions. OJ is supported by a PhD fellowship from the MITT doctoral school and the Occitanie Region as part of the AI for Health Program. Part of this work was funded by the ERC-2024-CoG RENOME and the ANR CLEAR Microscopy (ANR-25-CE45-3780). This work was performed using HPC resources from GENCI-IDRIS (Grant AD011012210). 

\end{acknowledgments}

\appendix
\justifying

\section{Functional analysis}
\label{spectral}

\subsection{Problem Statement}

Let $\Omega = B(0, L) \subset \R^d$ be a bounded domain. We partition $\Omega$ into an inner subdomain $\Oin = B(0, R)$ and an outer annulus $\Oout = \Omega \setminus \overline{\Oin}$, separated by the interface $\GammaInt = \partial \Oin$. We consider the concentration field $ c(t, x)$ governed by the following initial-boundary value problem:
\begin{small}
\begin{subequations}
\begin{align}
    \frac{\partial  c}{\partial t} - \nabla \cdot (D(x) \nabla  c) &= 0 & \text{in } \Omega \times (0, T), \label{eq:pde} \\
    -\Din \partial_n  c_{\mathrm{in}} &= -\Dout \partial_n  c_{\mathrm{out}} & \text{on } \GammaInt, \label{eq:flux} \\
    -\Din \partial_n  c_{\mathrm{in}} &= \kappa ( c_{\mathrm{in}} - \Gamma  c_{\mathrm{out}}) & \text{on } \GammaInt, \label{eq:jump} \\
    \Dout \partial_n  c_{\mathrm{out}} &= 0 & \text{on } \partial \Omega \setminus \GammaInt, \label{eq:neumann} \\
     c(0, \cdot) &=  c_0 & \text{in } \Omega,
\end{align}
\end{subequations}
\end{small}
where $D(x)$ takes values $\Din$ in $\Oin$ and $\Dout$ in $\Oout$. The parameter $\Gamma > 0$ denotes the partition coefficient, and $\kappa > 0$ represents interfacial permeability. The vector $n$ denotes the unit normal vector pointing outwards from the respective subdomain.

\subsection{Functional Framework}

The standard Sobolev space $H^1(\Omega)$ is ill-suited for this problem due to the discontinuity of the solution at the interface $r=R$ induced by $\Gamma \neq 1$. Instead, we work with a ``broken'' space. To address the asymmetry introduced by $\Gamma$, we define specific Hilbert spaces equipped with weighted inner products.

\begin{definition}[Hilbert Spaces and Inner Products]
\leavevmode
\begin{enumerate}
    \item Let $H = L^2(\Oin) \times L^2(\Oout)$ be the pivot space. We equip $H$ with the weighted inner product:
    \begin{equation}
    \label{eq:weighted-inner-product}
        \innerH{u}{v} = \int_{\Oin} u_{\mathrm{in}} v_{\mathrm{in}} \, dx + \Gamma \int_{\Oout} u_{\mathrm{out}} v_{\mathrm{out}} \, dx.
    \end{equation}
    The associated norm is denoted by $|u|_H = \sqrt{\innerH{u}{u}}$. Since $\Gamma > 0$, this norm is equivalent to the standard $L^2(\Omega)$ norm.
    
    \item Let $V = H^1(\Oin) \times H^1(\Oout)$. We equip $V$ with the weighted inner product:
    \begin{equation}
    \begin{split}
        \innerV{u}{v} = & \int_{\Oin} (u_{\mathrm{in}} v_{\mathrm{in}} + \nabla u_{\mathrm{in}} \cdot \nabla v_{\mathrm{in}}) \, dx \\
        & + \Gamma \int_{\Oout} (u_{\mathrm{out}} v_{\mathrm{out}} + \nabla u_{\mathrm{out}} \cdot \nabla v_{\mathrm{out}}) \, dx.
    \end{split}
    \end{equation}
    The associated norm is denoted by $\norm{u}_V = \sqrt{\innerV{u}{u}}$.
\end{enumerate}
\end{definition}

\begin{lemma}[Gelfand Triple]
The spaces $V$ and $H$ satisfy the dense and continuous inclusions:
\begin{equation}
    V \hookrightarrow H \equiv H' \hookrightarrow V'.
\end{equation}
\end{lemma}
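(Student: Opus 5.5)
The plan is to prove the three claims separately: the continuous embedding $V \hookrightarrow H$, the density of $V$ in $H$, and the identification $H \equiv H'$ together with the continuous embedding $H \hookrightarrow V'$. Throughout we work componentwise on $\Oin$ and $\Oout$. The weight $\Gamma>0$ only changes norms up to equivalence constants, so nothing is lost by doing this.

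\textbf{Continuity.} For $u=(u_{\mathrm{in}},u_{\mathrm{out}})\in V$ we have directly
\[
|u|_H^2=\int_{\Oin}u_{\mathrm{in}}^2\,dx+\Gamma\int_{\Oout}u_{\mathrm{out}}^2\,dx\le \norm{u}_V^2 .
\]
This holds because the $V$-norm contains the same weighted $L^2$ terms plus nonnegative gradient terms. Hence the inclusion $V\hookrightarrow H$ is continuous with constant $1$.

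\textbf{Density.} It suffices to approximate each component separately. The space $C_c^\infty(\Oin)\subset H^1(\Oin)$ is dense in $L^2(\Oin)$, and likewise $C_c^\infty(\Oout)\subset H^1(\Oout)$ is dense in $L^2(\Oout)$. Given $u\in H$ and $\varepsilon>0$, choose $\varphi_{\mathrm{in}},\varphi_{\mathrm{out}}$ in these spaces with
\[
\|u_{\mathrm{in}}-\varphi_{\mathrm{in}}\|_{L^2(\Oin)}<\varepsilon, \qquad \|u_{\mathrm{out}}-\varphi_{\mathrm{out}}\|_{L^2(\Oout)}<\varepsilon .
\]
Then $|u-\varphi|_H^2<(1+\Gamma)\varepsilon^2$. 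Note that $\varphi$ need not be continuous across $\GammaInt$, which is exactly why the broken space is convenient.

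\textbf{Dual embeddings.} Identify $H$ with $H'$ via the Riesz representation theorem for the weighted inner product $\innerH{\cdot}{\cdot}$; this is legitimate since $H$ is a Hilbert space for it. Next, define $T:H\to V'$ by $T h(v)=\innerH{h}{v}$ for $v\in V$. It is bounded, since
\[
|Th(v)|\le |h|_H|v|_H\le |h|_H\norm{v}_V .
\]
It is injective: if $\innerH{h}{v}=0$ for all $v\in V$, then density of $V$ in $H$ gives $h=0$.

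Finally, $T(H)$ is dense in $V'$. Suppose a functional $\xi\in V''\cong V$ annihilates $T(H)$; it corresponds to some $w\in V$ with $\innerH{h}{w}=0$ for all $h\in H$. Taking $h=w$ gives $w=0$.

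The only mildly delicate point is being consistent about which inner product is used to identify $H$ with $H'$. The choice is the weighted one, which is what later makes the operator $\mathcal{A}$ self-adjoint. Beyond that, each step is a routine componentwise argument.
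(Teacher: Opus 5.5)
Your proposal is correct and follows the paper's proof essentially step for step. Both arguments use the constant-one bound $|u|_H \le \norm{u}_V$ obtained by expanding the $V$-norm, density via $C_c^\infty(\Oin)\times C_c^\infty(\Oout)$, and the Riesz identification of $H$ with $H'$ through the weighted inner product. The only difference is that you explicitly check that the map $H\to V'$ is bounded, injective, and has dense range (the last via reflexivity of $V$), which the paper compresses into the phrase ``canonical inclusion.''
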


\begin{proof}
The algebraic inclusion $V \subset H$ is evident. By expanding the inner product in $V$, we observe that:
\[ \norm{u}_V^2 = |u|_H^2 + \left( \|\nabla u_{\mathrm{in}}\|_{L^2}^2 + \Gamma \|\nabla u_{\mathrm{out}}\|_{L^2}^2 \right). \]
Thus, $|u|_H \le \norm{u}_V$, proving the continuity of the injection with constant $1$.
Density follows from the density of $C_c^\infty(\Oin) \times C_c^\infty(\Oout)$ in $H$, noting that these test functions are elements of $V$.
Identifying $H$ with its dual $H'$ via the Riesz representation theorem using the weighted inner product $\innerH{\cdot}{\cdot}$, we obtain the canonical inclusion $H \hookrightarrow V'$.
\end{proof}

\subsection{Variational Formulation}

Let $ c(t) = ( c_{\mathrm{in}}(t),  c_{\mathrm{out}}(t))$ be the trial function and let $v = (v_{\mathrm{in}}, v_{\mathrm{out}}) \in V$ be a time-independent test function. To obtain a symmetric formulation compatible with the jump condition, we multiply the governing equation \eqref{eq:pde} on $\Oin$ by $v_{\mathrm{in}}$ and on $\Oout$ by $\Gamma v_{\mathrm{out}}$. Summing the integrals over the respective domains yields:
\begin{equation}
\begin{split}
    & \int_{\Oin} \partial_t  c_{\mathrm{in}} v_{\mathrm{in}} \, dx + \Gamma \int_{\Oout} \partial_t  c_{\mathrm{out}} v_{\mathrm{out}} \, dx \\
    = & \int_{\Oin} \nabla \cdot (\Din \nabla  c_{\mathrm{in}}) v_{\mathrm{in}} \, dx \\
    & + \Gamma \int_{\Oout} \nabla \cdot (\Dout \nabla  c_{\mathrm{out}}) v_{\mathrm{out}} \, dx.
\end{split}
\end{equation}
The left-hand side is precisely the time derivative of the weighted inner product, $\frac{d}{dt} ( c, v)_H$. 

We apply Green's formula (integration by parts) to the right-hand side. Let $\vec{n}_{\mathrm{in}}$ and $\vec{n}_{\mathrm{out}}$ denote the outward unit normals to $\Oin$ and $\Oout$, respectively. On the interface $\GammaInt$ (where $r=R$), we have $\vec{n}_{\mathrm{in}} = \vec{e}_r$ and $\vec{n}_{\mathrm{out}} = -\vec{e}_r$. The boundary condition on the external border $\partial \Omega \setminus \GammaInt$ is homogenous Neumann, so the boundary integral vanishes there. We obtain:
\begin{equation}
\begin{split}
    \text{RHS} &= - \int_{\Oin} \Din \nabla  c_{\mathrm{in}} \cdot \nabla v_{\mathrm{in}} \, dx \\
                &\quad + \int_{\GammaInt} \Din (\nabla  c_{\mathrm{in}} \cdot \vec{e}_r) v_{\mathrm{in}} \, d\sigma \\
               &\quad - \Gamma \int_{\Oout} \Dout \nabla  c_{\mathrm{out}} \cdot \nabla v_{\mathrm{out}} \, dx \\
               & \quad + \Gamma \int_{\GammaInt} \Dout (\nabla  c_{\mathrm{out}} \cdot (-\vec{e}_r)) v_{\mathrm{out}} \, d\sigma.
\end{split}
\end{equation}
Grouping the volume diffusion terms and the interface terms, and defining the radial flux $J = -\Din \partial_r  c_{\mathrm{in}}$ (where $\partial_r = \nabla \cdot \vec{e}_r$), the expression becomes:
\begin{equation*}
\begin{split}
    \text{RHS} &= - \int_{\Oin} \Din \nabla  c_{\mathrm{in}} \cdot \nabla v_{\mathrm{in}} \\
    & \quad - \Gamma \int_{\Oout} \Dout \nabla  c_{\mathrm{out}} \cdot \nabla v_{\mathrm{out}} \\
    & \quad + \mathcal{I}_{\Sigma}.
\end{split}
\end{equation*}
The interface term $\mathcal{I}_{\Sigma}$ is analyzed using the flux conservation condition \eqref{eq:flux}, which states $-\Din \partial_r  c_{\mathrm{in}} = -\Dout \partial_r  c_{\mathrm{out}} = J$. Thus:
\begin{equation*}
\begin{split}
    \mathcal{I}_{\Sigma} & = \int_{\GammaInt} (-J) v_{\mathrm{in}} \, d\sigma + \Gamma \int_{\GammaInt} (J) v_{\mathrm{out}} \, d\sigma \\
    & = \int_{\GammaInt} J (\Gamma v_{\mathrm{out}} - v_{\mathrm{in}}) \, d\sigma.
\end{split}
\end{equation*}
Using the Robin jump condition \eqref{eq:jump}, the flux is given by $J = \kappa ( c_{\mathrm{in}} - \Gamma  c_{\mathrm{out}})$. Substituting this into the boundary integral:
\begin{equation*}
\begin{split}
    \mathcal{I}_{\Sigma} &= \int_{\GammaInt} \kappa ( c_{\mathrm{in}} - \Gamma  c_{\mathrm{out}}) (\Gamma v_{\mathrm{out}} - v_{\mathrm{in}}) \, d\sigma \\
    & = - \int_{\GammaInt} \kappa ( c_{\mathrm{in}} - \Gamma  c_{\mathrm{out}}) (v_{\mathrm{in}} - \Gamma v_{\mathrm{out}}) \, d\sigma.
\end{split}
\end{equation*}

Based on this derivation, we define the symmetric bilinear form $a: V \times V \to \R$ as:
\begin{equation}
\begin{split}
    a(u, v) = & \int_{\Oin} \Din \nabla u_{\mathrm{in}} \cdot \nabla v_{\mathrm{in}} \, dx \\
    &+ \Gamma \int_{\Oout} \Dout \nabla u_{\mathrm{out}} \cdot \nabla v_{\mathrm{out}} \, dx \\
    &+ \int_{\GammaInt} \kappa (u_{\mathrm{in}} - \Gamma u_{\mathrm{out}})(v_{\mathrm{in}} - \Gamma v_{\mathrm{out}}) \, d\sigma.
\end{split}
\end{equation}

The variational problem is: Find $ c \in L^2(0,T; V) \cap C([0,T]; H)$ with $\partial_t  c \in L^2(0,T; V')$ such that:
\begin{equation} \label{eq:weak_final}
    \frac{d}{dt} ( c(t), v)_{H} + a( c(t), v) = 0 \quad \forall v \in V,
\end{equation}
subject to the initial condition $ c(0) =  c_0$.

\subsection{Existence and Uniqueness \label{sec:existence}}

We utilize the theorem of existence and uniqueness for parabolic problems (Theorem X.9, J.L. Lions).

\begin{lemma}[Properties of the Bilinear Form] \label{lem:prop}
The bilinear form $a(\cdot, \cdot)$ satisfies:
\begin{enumerate}[label=(\roman*)]
    \item \textbf{Continuity:} There exists $M > 0$ such that $|a(u, v)| \le M \norm{u}_V \norm{v}_V$ for all $u, v \in V$.
    \item \textbf{Coercivity (Gårding's Inequality):} There exist $\alpha > 0$ and $C_0 \ge 0$ such that $a(u, u) \ge \alpha \norm{u}_V^2 - C_0 |u|_H^2$ for all $u \in V$.
\end{enumerate}
\end{lemma}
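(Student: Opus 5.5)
The plan is to treat the two bulk terms and the interface term of $a$ separately. The bulk terms are controlled directly by the weighted $V$-norm. The interface term is controlled by the trace theorem on each subdomain. Everything rests on one tool: the trace inequality for Lipschitz domains. For $\Oin$ and for $\Oout$, both with $\GammaInt$ as a boundary component, there is a constant $C_{\mathrm{tr}}$ with
$\|w\|_{L^2(\GammaInt)} \le C_{\mathrm{tr}} \|w\|_{H^1}$.
We also need its sharper $\varepsilon$-form,
$\|w\|_{L^2(\GammaInt)}^2 \le \varepsilon \|\nabla w\|_{L^2}^2 + C_\varepsilon \|w\|_{L^2}^2$ for every $\varepsilon>0$.
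This follows from compactness of the trace $H^1\to L^2(\GammaInt)$ combined with Ehrling's lemma. Alternatively, on the ball or annulus it can be checked directly in radial coordinates via $|w(R)|^2 \le \varepsilon\int|\partial_r w|^2 + C_\varepsilon\int |w|^2$ along rays.

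\textbf{Continuity.} Apply Cauchy--Schwarz to each bulk term:
$|\int_{\Oin}\Din\nabla u_{\mathrm{in}}\cdot\nabla v_{\mathrm{in}}| \le \Din\|u\|_V\|v\|_V$,
and the outer term is bounded by $\Dout\|u\|_V\|v\|_V$, because the factor $\Gamma$ is already part of the $V$-norm. For the interface term, write
$|\int_{\GammaInt}\kappa(u_{\mathrm{in}}-\Gamma u_{\mathrm{out}})(v_{\mathrm{in}}-\Gamma v_{\mathrm{out}})| \le \kappa \|u_{\mathrm{in}}-\Gamma u_{\mathrm{out}}\|_{L^2(\GammaInt)}\|v_{\mathrm{in}}-\Gamma v_{\mathrm{out}}\|_{L^2(\GammaInt)}$.
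Each factor is at most $C_{\mathrm{tr}}(\|u_{\mathrm{in}}\|_{H^1}+\Gamma\|u_{\mathrm{out}}\|_{H^1}) \le C_{\mathrm{tr}}(1+\sqrt\Gamma)\|u\|_V$. This uses $\|u_{\mathrm{out}}\|_{H^1}\le \Gamma^{-1/2}\|u\|_V$. Collecting the three bounds gives $M = \Din+\Dout+\kappa C_{\mathrm{tr}}^2(1+\sqrt\Gamma)^2$.

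\textbf{Gårding inequality.} Here the interface term is nonnegative: it equals $\kappa\int_{\GammaInt}(u_{\mathrm{in}}-\Gamma u_{\mathrm{out}})^2\,d\sigma \ge 0$. So it can simply be dropped, leaving
$a(u,u)\ge \Din\|\nabla u_{\mathrm{in}}\|^2+\Gamma\Dout\|\nabla u_{\mathrm{out}}\|^2 \ge \min(\Din,\Dout)\bigl(\|u\|_V^2-|u|_H^2\bigr)$.
This yields $\alpha=\min(\Din,\Dout)$ and $C_0=\alpha$. Because of the sign of the interface term, the $\varepsilon$-trace inequality is not even needed for (ii).

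The only genuinely delicate point is therefore the trace bound used in (i). Its ingredients are uniformity of the constant for the annulus $\Oout$ and correct bookkeeping of the $\Gamma$-weights. I would handle it by quoting the standard trace theorem on each Lipschitz subdomain separately; the interface is smooth, so no difficulty arises.
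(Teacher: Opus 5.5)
Your proposal is correct and follows the paper's proof: Cauchy--Schwarz on the bulk terms plus the trace theorem on each subdomain gives continuity, and dropping the nonnegative interface term gives Gårding's inequality with $\alpha = C_0 = \min(\Din,\Dout)$. Your continuity constant ($\Din+\Dout$ for the bulk part and $C_{\mathrm{tr}}(1+\sqrt\Gamma)$ for each trace factor) is somewhat sharper and more explicit than the paper's $D_{\max}\max(1,\Gamma)$ and its unspecified $C_\Gamma$, and the $\varepsilon$-trace inequality you set up is, as you note yourself, never needed.
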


\begin{proof}
We decompose the bilinear form into volume terms ($I_{\text{vol}}$) and the boundary term ($I_{\text{bnd}}$).

\noindent \textbf{(i) Proof of Continuity:} We estimate each term separately.
Using the Cauchy-Schwarz inequality and the definition of the norm induced by $\innerV{\cdot}{\cdot}$:
\begin{align*}
    & \quad |I_{\text{vol}}| \\
    &= \left| \int_{\Oin} \Din \nabla u_{\mathrm{in}} \cdot \nabla v_{\mathrm{in}} + \Gamma \int_{\Oout} \Dout \nabla u_{\mathrm{out}} \cdot \nabla v_{\mathrm{out}} \right| \\
    &\le D_{\max} \left( \|\nabla u_{\mathrm{in}}\|_{L^2} \|\nabla v_{\mathrm{in}}\|_{L^2} + \Gamma \|\nabla u_{\mathrm{out}}\|_{L^2} \|\nabla v_{\mathrm{out}}\|_{L^2} \right) \\
    &\le D_{\max} \max(1,\Gamma) \norm{u}_V \norm{v}_V.
\end{align*}
For $I_{\text{bnd}}$, the Trace Theorem implies there exists $C_{\mathrm{tr}}$ such that $\|w\|_{L^2(\GammaInt)} \le C_{\mathrm{tr}} \|w\|_{H^1}$. By the triangle inequality:
\[
    \|u_{\mathrm{in}} - \Gamma u_{\mathrm{out}}\|_{L^2(\GammaInt)} \le C_{\mathrm{tr}} \|u_{\mathrm{in}}\|_{H^1} + \Gamma C_{\mathrm{tr}} \|u_{\mathrm{out}}\|_{H^1}.
\]
Given the definition of $\norm{u}_V$, there exists $C_\Gamma$ such that $\|u_{\mathrm{in}} - \Gamma u_{\mathrm{out}}\|_{L^2(\GammaInt)} \le C_\Gamma \norm{u}_V$. Thus:
\[ |I_{\text{bnd}}| \le \kappa C_\Gamma^2 \norm{u}_V \norm{v}_V. \]
Summing these bounds, continuity holds.

\bigskip

\noindent \textbf{(ii) Proof of Coercivity:}
Since $\kappa \ge 0$, the boundary term $I_{\text{bnd}}$ is positive. Let $D_{\min} = \min(\Din, \Dout)$. We have:
\[ a(u, u) \ge D_{\min} \left( \|\nabla u_{\mathrm{in}}\|_{L^2}^2 + \Gamma \|\nabla u_{\mathrm{out}}\|_{L^2}^2 \right). \]
We rewrite the term in parentheses using the inner products defined in subsection 2. Observe that:
\[ \innerV{u}{u} = \innerH{u}{u} + \left( \|\nabla u_{\mathrm{in}}\|_{L^2}^2 + \Gamma \|\nabla u_{\mathrm{out}}\|_{L^2}^2 \right). \]
Therefore:
\[ \|\nabla u_{\mathrm{in}}\|_{L^2}^2 + \Gamma \|\nabla u_{\mathrm{out}}\|_{L^2}^2 = \norm{u}_V^2 - |u|_H^2. \]
Substituting this relation into the inequality for $a(u,u)$:
\[ a(u, u) \ge D_{\min} (\norm{u}_V^2 - |u|_H^2). \]
This yields Gårding's inequality with $\alpha = D_{\min}$ and $C_0 = D_{\min}$.
\end{proof}

\begin{theorem}[Existence and Uniqueness]
Given $ c_0 \in H$, there exists a unique solution $ c$ to the problem \eqref{eq:weak_final} satisfying:
$$  c \in L^2(0, T; V) \cap C([0, T]; H) \quad \text{and} \quad \frac{d c}{dt} \in L^2(0, T; V'). $$
\end{theorem}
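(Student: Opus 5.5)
The plan is to apply the Lions theorem (\cite[Theorem X.9]{brezis2011functional}) to the Gelfand triple $V \hookrightarrow H \equiv H' \hookrightarrow V'$ established in the Gelfand Triple lemma, with the bilinear form $a$. The theorem is stated for a form that is continuous and coercive on $V$ (or, in the Lions--Magenes version, time-dependent and measurable in $t$; ours is time-independent, so measurability is trivial). Lemma~\ref{lem:prop} gives continuity but only Gårding's inequality. The first step is therefore the standard exponential shift.

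\textbf{Shift.} Set $\tilde c(t) = e^{-C_0 t} c(t)$ with $C_0 = D_{\min}$. Then $c$ solves \eqref{eq:weak_final} if and only if $\tilde c$ solves
\[
\frac{d}{dt}(\tilde c, v)_H + \tilde a(\tilde c, v) = 0, \qquad \tilde a(u,v) = a(u,v) + C_0 (u,v)_H .
\]
By Lemma~\ref{lem:prop}(i) and $|u|_H \le \|u\|_V$, the form $\tilde a$ is continuous on $V$. By Lemma~\ref{lem:prop}(ii), $\tilde a(u,u) \ge \alpha \|u\|_V^2$, so $\tilde a$ is coercive. The map $u \mapsto e^{C_0 t} u$ preserves each of the classes $L^2(0,T;V)$, $C([0,T];H)$ and the condition $\partial_t u \in L^2(0,T;V')$ on a bounded interval $[0,T]$, since the product rule holds for $\partial_t(e^{C_0t}\tilde c)$ in $V'$. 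Hence existence and uniqueness for $\tilde c$ transfer directly to $c$.

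\textbf{Apply Lions.} With $f = 0$ and $\tilde c_0 = c_0 \in H$, the Lions theorem yields a unique $\tilde c \in L^2(0,T;V)$ with $\tilde c' \in L^2(0,T;V')$ solving the equation in $V'$ for a.e.\ $t$. The Lions--Magenes lemma (continuous embedding of $W(0,T) = \{u \in L^2(V) : u' \in L^2(V')\}$ into $C([0,T];H)$) then gives $\tilde c \in C([0,T];H)$. This makes the initial condition meaningful, and the identity $\frac{d}{dt}(\tilde c, v)_H = \langle \tilde c', v\rangle_{V',V}$ makes the weak formulation consistent with \eqref{eq:weak_final}.

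\textbf{Uniqueness.} Uniqueness is also immediate from an energy estimate. For the difference $w$ of two solutions, test with $w$ itself, justified by the same lemma, which gives $\frac{d}{dt}|w|_H^2 = 2\langle w', w\rangle$. This yields
\[
\tfrac12 \frac{d}{dt}|w|_H^2 = -a(w,w) \le C_0 |w|_H^2 ,
\]
and Grönwall's inequality with $w(0) = 0$ forces $w \equiv 0$.

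The main subtlety, rather than a real obstacle, is ensuring that the duality pairing and the time derivative in \eqref{eq:weak_final} are taken with respect to the weighted pivot inner product $(\cdot,\cdot)_H$, so that the Riesz identification $H \equiv H'$ matches the one used in the Gelfand Triple lemma. Also, the trace-based continuity constant depends on $\Gamma$ and $\kappa$; this is harmless since both are fixed positive numbers.
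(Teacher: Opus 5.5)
Your proposal is correct and follows the paper's route. The paper invokes Lions' theorem (Brezis, Theorem X.9) directly with the continuity and Gårding inequality of Lemma~\ref{lem:prop}, noting only that measurability in $t$ is trivial. Your exponential shift, the Lions--Magenes continuity argument and the Grönwall uniqueness step are all valid, but they are redundant for the cited version of the theorem, which already allows the $-C_0|u|_H^2$ term and itself delivers $C([0,T];H)$-continuity and uniqueness.
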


\begin{proof}
The result follows directly from Lemma \ref{lem:prop} and Lions' Theorem. The measurability of $t \mapsto a(u,v)$ is trivial as the coefficients are time-independent.
\end{proof}

\subsection{Spectral Analysis}

In this subsection, we investigate the spectral properties of the spatial operator governing the diffusion process. We aim to show that the operator possesses a purely discrete point spectrum, allowing for a series expansion of the solution.

Let $\mathcal{A}: D(\mathcal{A}) \subset H \to H$ be the unbounded linear operator associated with the bilinear form $a(\cdot, \cdot)$ and the Hilbert space $H$. It is formally defined by the relation:
\begin{equation}
    \innerH{\mathcal{A}u}{v} = a(u, v) \quad \forall u \in D(\mathcal{A}), \forall v \in V.
\end{equation}
where the domain $D(\mathcal{A})$ consists of functions $u \in V$ such that the map $v \mapsto a(u,v)$ is continuous on $H$.

\begin{theorem}[Spectral Decomposition\label{thm:spectral}]
The operator $\mathcal{A}$ is self-adjoint and admits a purely discrete spectrum consisting of a sequence of real eigenvalues:
\[ 0 \le \lambda_0 \le \lambda_1 \le \dots \le \lambda_k \le \dots \to +\infty. \]
Moreover, the associated eigenvectors $\{\Phi_k\}_{k \in \mathbb{N}}$ form a complete orthonormal basis of the weighted space $H$.
\end{theorem}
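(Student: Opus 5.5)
The plan is to treat $\mathcal{A}$ as the operator associated with the closed, symmetric, semibounded form $a$ on the Gelfand triple $V \hookrightarrow H \hookrightarrow V'$. We then obtain compactness of its resolvent from a Rellich-type compact embedding of $V$ into $H$. The spectral theorem for compact self-adjoint operators then yields the discrete spectrum and the orthonormal eigenbasis.

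\textbf{Step 1 (self-adjointness).} The form $a$ is symmetric by inspection of its definition: both volume terms and the interface term $\kappa(u_{\mathrm{in}}-\Gamma u_{\mathrm{out}})(v_{\mathrm{in}}-\Gamma v_{\mathrm{out}})$ are symmetric in $u,v$. It is continuous on $V$ and satisfies G\aa rding's inequality by Lemma~\ref{lem:prop}. We consider the shifted form $a_{C_0}(u,v)=a(u,v)+C_0\innerH{u}{v}$, which is coercive on $V$ with constant $\alpha$. For every $f\in H$, Lax--Milgram gives a unique $u\in V$ with $a_{C_0}(u,v)=\innerH{f}{v}$ for all $v\in V$. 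Hence $T:=(\mathcal{A}+C_0)^{-1}:H\to H$ is well defined and bounded, with $|Tf|_H\le \norm{Tf}_V\le \alpha^{-1}|f|_H$. Symmetry of $a$ gives $\innerH{Tf}{g}=a_{C_0}(Tf,Tg)=\innerH{f}{Tg}$, so $T$ is bounded and self-adjoint. It is also injective, since $Tf=0$ forces $\innerH{f}{v}=0$ for all $v$ in the dense subspace $V$. A bounded, injective, self-adjoint $T$ has a self-adjoint inverse, so $\mathcal{A}=T^{-1}-C_0$ is self-adjoint on $D(\mathcal{A})$, which coincides with the domain defined in the paper.

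\textbf{Step 2 (compactness).} We show that the inclusion $V\hookrightarrow H$ is compact. Since $\Oin$ and $\Oout$ are bounded Lipschitz domains (a ball and a spherical shell), Rellich--Kondrachov gives compact embeddings $H^1(\Oin)\hookrightarrow L^2(\Oin)$ and $H^1(\Oout)\hookrightarrow L^2(\Oout)$. The product of two compact maps is compact, and the weights $1,\Gamma$ only change the norms to equivalent ones. Therefore $T$, which factors as $H\to V\hookrightarrow H$, is compact. This is the step where the geometry enters. I expect it to be the only genuinely nontrivial point, and only mildly so, since one has to work with the broken space rather than $H^1(\Omega)$. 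Working with the broken space is exactly what makes the argument clean, because Rellich is applied on each subdomain separately.

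\textbf{Step 3 (spectrum and sign).} The Hilbert--Schmidt spectral theorem applied to the compact, self-adjoint, injective operator $T$ gives an orthonormal basis $\{\Phi_k\}$ of $H$ (with $\innerH{\cdot}{\cdot}$) of eigenvectors with eigenvalues $\mu_k\to0$. Injectivity means no basis vector lies in the kernel. Moreover, $\mu_k=\innerH{T\Phi_k}{\Phi_k}=a_{C_0}(T\Phi_k,T\Phi_k)/\mu_k^{2}\cdot\mu_k^{2}>0$ by coercivity. Setting $\lambda_k=\mu_k^{-1}-C_0$ gives eigenvalues of $\mathcal{A}$ that are real, of finite multiplicity, and tend to $+\infty$, so we can order them increasingly.

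Nonnegativity follows from $\lambda_k=a(\Phi_k,\Phi_k)\ge0$: each term of $a(u,u)$ is nonnegative because $D>0$, $\Gamma>0$ and $\kappa\ge0$. The bottom eigenvalue $\lambda_0=0$ is attained by $u_{\mathrm{in}}=\Gamma$, $u_{\mathrm{out}}=1$ (suitably normalized), which makes all three terms of $a(u,u)$ vanish. Finally, since $H$ is a space of real functions, the $\Phi_k$ can be chosen real.
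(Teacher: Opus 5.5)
Your proposal is correct and follows essentially the same route as the paper. Both arguments build the resolvent of the shifted form via Lax--Milgram, get compactness from Rellich--Kondrachov applied separately on $\Oin$ and $\Oout$ (the broken space $V$), and then apply the spectral theorem for compact self-adjoint operators. You also supply details the paper leaves implicit: the self-adjointness and injectivity of the resolvent, the positivity of its eigenvalues (needed for $\lambda_k\to+\infty$), and the bound $\lambda_k=a(\Phi_k,\Phi_k)\ge 0$ with $\lambda_0=0$ attained by $(u_{\mathrm{in}},u_{\mathrm{out}})=(\Gamma,1)$, which the paper states in the theorem but never proves.
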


\begin{proof}
The proof relies on the spectral theorem for compact self-adjoint operators. We proceed in three steps: establishing self-adjointness, constructing the resolvent operator, and proving the compactness of the resolvent.

\paragraph{Symmetry and Self-Adjointness}
The bilinear form $a(u, v)$ defined in subsection 3 is symmetric, i.e., $a(u, v) = a(v, u)$ for all $u, v \in V$. Since $\mathcal{A}$ is generated by a symmetric, continuous, and coercive (up to a shift) bilinear form on a Gelfand triple, $\mathcal{A}$ is a self-adjoint operator on $H$.

\paragraph{Construction of the Resolvent}
From Lemma \ref{lem:prop}, the form satisfies Gårding's inequality:
\[ a(u, u) \ge \alpha \norm{u}_V^2 - C_0 |u|_H^2. \]
Let $\mu > C_0$. We consider the shifted operator $\mathcal{A}_{\mu} = \mathcal{A} + \mu I$. The associated bilinear form is:
\[ a_{\mu}(u, v) = a(u, v) + \mu \innerH{u}{v}. \]
This form is continuous on $V \times V$ and strictly coercive:
\[ a_{\mu}(u, u) \ge \alpha \norm{u}_V^2 + (\mu - C_0)|u|_H^2 \ge \alpha \norm{u}_V^2. \]
For any source term $f \in H$, the Lax-Milgram theorem ensures the existence of a unique solution $u \in V$ to the variational problem:
\[ a_{\mu}(u, v) = \innerH{f}{v} \quad \forall v \in V. \]
We define the resolvent operator $S_{\mu}: H \to H$ by $S_{\mu}f = u$. Note that $S_{\mu} = (\mathcal{A} + \mu I)^{-1}$.

\paragraph{Compactness of the Resolvent}
To show that the spectrum is discrete, we must prove that $S_{\mu}$ is a compact operator on $H$.
First, we establish that $S_{\mu}$ maps $H$ continuously into $V$. Using the coercivity of $a_{\mu}$ and the continuity of the inner product:
\[ \alpha \norm{u}_V^2 \le a_{\mu}(u, u) = \innerH{f}{u} \le |f|_H |u|_H. \]
Since the embedding $V \hookrightarrow H$ is continuous ($|u|_H \le \norm{u}_V$), we have:
\[ \alpha \norm{u}_V^2 \le |f|_H \norm{u}_V \implies \norm{u}_V \le \frac{1}{\alpha} |f|_H. \]
Thus, the linear map $f \mapsto u$ is bounded from $H$ to $V$.

Next, we consider the injection $i: V \hookrightarrow H$. Since $\Oin$ and $\Oout$ are bounded domains with Lipschitz boundaries (spheres), the Rellich-Kondrachov theorem applies to each subdomain. Consequently, the inclusion of the broken space $V = H^1(\Oin) \times H^1(\Oout)$ into $H = L^2(\Oin) \times L^2(\Oout)$ is \textbf{compact}.

The resolvent $S_{\mu}$ viewed as an operator on $H$ is the composition:
\[ H \xrightarrow[\text{continuous}]{(\mathcal{A} + \mu I)^{-1}} V \xrightarrow[\text{compact}]{i} H. \]
The composition of a continuous operator and a compact operator is compact. Therefore, $S_{\mu}$ is a compact, positive, self-adjoint operator on $H$.

\paragraph{Conclusion}
By the Spectral Theorem for compact operators (see \cite[Theorem VI.8]{brezis2011functional}), $S_{\mu}$ admits a countable sequence of eigenvalues $\nu_k \to 0$ and an orthonormal basis of eigenvectors. The eigenvalues of $\mathcal{A}$ are given by $\lambda_k = \frac{1}{\nu_k} - \mu$, which implies $\lambda_k \to +\infty$.

\subsection{Explicit Construction of the Basis Functions}

In this section, we derive the explicit analytical form of the eigenfunctions $\{\Phi_k\}_{k\in\mathbb{N}}$ and the transcendental equation governing the eigenvalues. We assume the spatial dimension is $d=3$ and use spherical coordinates $(r, \theta, \phi)$.

The eigenvalue problem $\mathcal{A} \Phi = \lambda \Phi$ corresponds to finding non-trivial solutions of:
\begin{equation}
    \begin{cases}
        -\nabla \cdot (D \nabla \Phi) = \lambda \Phi\\
        F(\Phi) = 0 \\
    \end{cases}
\end{equation}

where $F(\Phi)$ represents the functional that encodes the violation of the interface and boundary conditions \eqref{eq:flux}-\eqref{eq:jump}-\eqref{eq:neumann} : 

\begin{equation}
\begin{split}
    & F(u) = \\
    & \begin{bmatrix}
        -\Din \partial u_{\mathrm{in}} (r=R)  + \Dout \partial u_{\mathrm{out}} (r=R) \\
        \Din \partial u_{\mathrm{in}} (r=R) + \kappa \left[u_{\mathrm{in}} (r=R) - \Gamma u_{\mathrm{out}} (r=R)\right] \\
        \Dout \partial u_{\mathrm{out}} (r=L)
    \end{bmatrix} 
\end{split}
\end{equation}
\subsubsection{Resolution in the Spherical Harmonics Basis}

As the spherical harmonics $\{Y_{\ell m}\}_{l\in\mathbb{N}, |m| \le l}$ form a complete basis for $L^2(S^2)$, the eigenfunctions $\Phi(x)$ can be decomposed as:
\begin{equation}
    \Phi_n(r, \theta, \phi) = \sum_{\ell=0}^{\infty} \sum_{m=-\ell}^{\ell} \mathcal{R}_{\ell m}(r) Y_{\ell m}(\theta, \phi)
\end{equation}

Since the operators $\nabla \cdot (D \nabla \cdot)$ and $F$ are continuous in $H$ (see. Lemma~\ref{lem:prop}), they can be applied term-by-term to the series. Therefore:

    \begin{numcases}{}
        -\sum_{\ell m}\nabla \cdot (D \nabla \mathcal{R}_{\ell m} Y_{\ell m}) = \sum_{\ell m} \lambda \mathcal{R}_{\ell m} Y_{\ell m} \label{eq:term1}\\
        \sum_{\ell m} F\left( \mathcal{R}_{\ell m} Y_{\ell m} \right) = 0 \label{eq:term2}
    \end{numcases}

Now since $\Delta u(r, \theta, \phi) = \frac{1}{r^2} \partial_r (r^2 \partial_r u) + \frac{1}{r^2} \Delta_{S^2} u$ and $\Delta_{S^2} Y_{\ell m} = -\ell(\ell+1) Y_{\ell m}$, Equation~\eqref{eq:term1} becomes:
\begin{equation}
\begin{split}
    & \sum_{\ell m} \left[- \frac{1}{r^2} \partial_r \left( D r^2 \partial_r \mathcal{R}_{\ell m} \right) + \frac{D \ell(\ell+1)}{r^2} \mathcal{R}_{\ell m} \right] Y_{\ell m} \\
    = & \sum_{\ell m} \lambda \mathcal{R}_{\ell m} Y_{\ell m}.
\end{split}
\end{equation}

Using the radial symmetry of $F$, Equation~\eqref{eq:term2} becomes:
\begin{equation}
    \sum_{\ell=0}^{\infty} \sum_{m=-\ell}^{\ell} F\left( \mathcal{R}_{\ell mn} \right)Y_{\ell m} = 0
\end{equation}

By orthogonality of the spherical harmonics, each radial function $\mathcal{R}_{\ell m}(r)$ satisfies:
\begin{equation}
    - \frac{1}{r^2} \partial_r \left( D r^2 \partial_r \mathcal{R}_{\ell m} \right) + \frac{D \ell(\ell+1)}{r^2} \mathcal{R}_{\ell m} = \lambda \mathcal{R}_{\ell m}
    \label{eq:radial_eq}
\end{equation}

with the interface and boundary conditions respected: $F\left( \mathcal{R}_{\ell m} \right) = 0$.

As the equation does not depend on $m$, the radial functions are simply $\mathcal{R}_{\ell}(r)$ satisfying \eqref{eq:radial_eq} with the interface and boundary conditions $F\left( \mathcal{R}_{\ell} \right) = 0$.

Now, since $\Phi_{\ell m}(r, \theta, \phi) = \mathcal{R}_{\ell}(r) Y_{\ell m}(\theta, \phi)$ are also eigenfunctions, they constitute the complete set of eigenfunctions associated with the eigenvalue $\lambda$ as $\Phi$ is a linear combination of these functions.

\subsubsection{Radial Solutions}
The equation \eqref{eq:radial_eq} is the radial part of the Helmholtz equation in spherical coordinates whose solutions in $L^2([0,R])$ are the linear combination of the spherical Bessel functions $j_\ell$ and $y_\ell$.  We then have: 

\paragraph{Inner Domain ($0 \le r < R$):}
The solution must be regular at the origin $r=0$ and $y_\ell$ is singular at $0$. The general solution is then proportional to the spherical Bessel function of the first kind ($j_\ell$):
\begin{equation}
    \mathcal{R}_\ell^{\mathrm{in}}(r) = A j_\ell\left(k_\mathrm{in}\right)
\end{equation}
with $k_\mathrm{in}=\sqrt{\lambda/\Din}$.
\paragraph{Outer Domain ($R < r < L$):}
The solution is a linear combination of spherical Bessel functions of the first kind ($j_\ell$) and second kind ($y_\ell$):
\begin{equation}
    \mathcal{R}_\ell^{\mathrm{out}}(r) = B j_\ell\left(k_\mathrm{out}\right) + C y_\ell\left(k_\mathrm{out}\right)
\end{equation}
with $k_\mathrm{in}=\sqrt{\lambda/\Dout}$.

The Neumann boundary condition at $r=L$ requires $\partial_r \mathcal{R}^{\mathrm{out}}_{\ell}(L) = 0$.
We require:
\[ B k_\mathrm{out} j_\ell'\left(k_\mathrm{out}L\right) + C k_\mathrm{out} y_\ell'\left(k_\mathrm{out}L\right) = 0. \]
To satisfy this automatically, we define the auxiliary function $\psi_\ell(z; z_L)$ as:
\begin{equation}
    \psi_\ell(z; z_L) = y_\ell'(z_L) j_\ell(z) - j_\ell'(z_L) y_\ell(z).
\end{equation}
Thus, the outer solution takes the form (up to a normalization constant):
\begin{equation}
    \mathcal{R}_\ell^{\mathrm{out}}(r) = \mathcal{C}_\ell \, \psi_\ell(k_\mathrm{out} r; k_\mathrm{out} L).
\end{equation}

\subsection{Interface Matching and Eigenvalues}

We apply the transmission conditions at the interface $r=R$:
\begin{enumerate}
    \item \textbf{Flux Continuity:} $\Din \partial_r \mathcal{R}^{\mathrm{in}}_{\ell} = \Dout \partial_r \mathcal{R}^{\mathrm{out}}_{\ell}$.
    \item \textbf{Robin Jump:} $-\Din \partial_r \mathcal{R}^{\mathrm{in}}_{\ell} = \kappa (\mathcal{R}^{\mathrm{in}}_{\ell} - \Gamma \mathcal{R}^{\mathrm{out}}_{\ell})$.
\end{enumerate}

We substitute the expressions for $\mathcal{R}$:
\begin{align}
    -\Din k_\mathrm{in} A j_\ell'(k_\mathrm{in}R) &= -\Dout  k_\mathrm{out} \mathcal{C} \psi_\ell'(k_{\mathrm{out}}R; k_{\mathrm{out}} L), \label{eq:sys1} \\
     &= \kappa \left[ A j_\ell(k_{\mathrm{in}}R) - \Gamma \mathcal{C} \psi_\ell(k_{\mathrm{out}}R; k_{\mathrm{out}} L) \right]. \label{eq:sys2}
\end{align}
From \eqref{eq:sys1}, we find the amplitude ratio $\beta = \mathcal{C}/A$:
\begin{equation}
    \beta = \frac{j_\ell'(k_{\mathrm{in}}R)}{\sqrt\Delta \psi_\ell'(k_{\mathrm{out}}R; k_{\mathrm{out}} L)}.
\label{eq:amplitude_ratio}
\end{equation}
Substituting this into \eqref{eq:sys2} yields the \textbf{transcendental equation} for the eigenvalues $\lambda$:
\begin{equation}
\begin{split}
    & -\Din k_{\mathrm{in}} j_\ell'(k_{\mathrm{in}} R) \\
    = & \kappa \left[ j_\ell(k_{\mathrm{in}} R) - \Gamma \frac{j_\ell'(k_{\mathrm{in}}R)\psi_\ell(k_{\mathrm{out}}R; k_{\mathrm{out}} L)}{\sqrt\Delta \psi_\ell'(k_{\mathrm{out}} R; k_{\mathrm{out}} L)}  \right].
\end{split}
\label{eq:transcendental}
\end{equation}

For each $\ell \in \mathbb{N}$, we denote the solutions of this equation by $\{\lambda_{\ell n}\}_{n \in \mathbb{N}}$.
The spectrum of the operator $\mathcal{A}$ is then given by the union $\bigcup_{\ell=0}^{\infty} \{\lambda_{\ell n}\}_{n \in \mathbb{N}}$. 
Notes that each eigenvalue $\lambda_{\ell n}$ has multiplicity $2\ell + 1$ due to the spherical harmonics.

\subsubsection{Basis Normalization}

For each triplet $(\ell, m, n)$, the eigenvector is:
\begin{equation}
    \Phi_{\ell m n}(r,\theta,\varphi)
= Y_{\ell m}(\theta,\varphi)\, f_{\ell n}(r),
\end{equation}
with the normalized radial part defined separately as:
\begin{equation}
\begin{cases}
f_{\ell n}^\mathrm{in}(r)=N_{\ell n}\, j_\ell\left(\sqrt\frac{\lambda_{\ell n}}{\Din} r\right),
& \textrm{in } \Oin,\\[0.3em]
f_{\ell n}^\mathrm{out}(r)=N_{\ell n}\,\beta_{\ell n}\,
\psi_\ell\left(\sqrt\frac{\lambda_{\ell n}}{\Dout} r;\sqrt\frac{\lambda_{\ell n}}{\Dout}L\right),
& \textrm{in } \Oout.
\end{cases}    
\end{equation}

The normalization constant $N_{\ell,n}$ is chosen such that $|\Phi_{\ell m n}|_H = 1$:
\begin{equation}
    N_{\ell n}^{-2} = \int_0^R \left[ j_\ell(k r) \right]^2 r^2 dr + \Gamma \beta_{\ell n}^2 \int_R^L \left[ \psi_\ell(k / \sqrt\Delta r) \right]^2 r^2 dr.
\label{eq:normalization_constant}
\end{equation}
These functions $\Phi_{\ell n m}$ form the orthonormal Hilbertian basis of $H$ guaranteed by the spectral theorem.

Then, the operator $\mathcal{A}$ acts on the basis functions as:
\begin{equation}
    \mathcal{A} \Phi_{\ell m n} = \lambda_{\ell n} \Phi_{\ell m n}.
\end{equation}

\subsection{Green's Function Representation}

Using the spectral decomposition of $\mathcal{A}$, we can express the solution $c(t)$ of the variational problem \eqref{eq:weak_final} in terms of the eigenfunctions and eigenvalues. Given the initial condition $c_0 = \delta_{r_0,\theta_0,\phi_0} \in H$, we expand $c_0$ in the orthonormal basis $\{\Phi_{\ell n m}\}$:
\begin{equation}
    c_0 = \sum_{\ell=0}^{\infty} \sum_{n=0}^{\infty} \sum_{m=-\ell}^{\ell} c_{\ell m n} \Phi_{\ell m n},
\end{equation}
where the coefficients are given by:
\begin{align}
    c_{\ell m n} & = \innerH{c_0}{\Phi_{\ell m n}} \\
                & = 
                \begin{cases}
                    \Phi_{\ell m n}^*(r_0, \theta_0, \phi_0) & \textrm{in } \Oin, \\
                    \Gamma \Phi_{\ell m n}^*(r_0, \theta_0, \phi_0) & \textrm{in } \Oout.
                \end{cases}
                \label{eq:coefficients}
\end{align}

The Green's function can then be expressed as:
\begin{align}
\begin{split}
    & G_t(r,\theta,\phi,t | r_0, \theta_0, \phi_0)  \\
    = & \sum_{\ell m n} c_{\ell m n}(r_0, \theta_0, \phi_0) e^{-\lambda_{\ell n} t} \Phi_{\ell m n}(r,\theta,\phi)
\end{split}
\end{align}

\end{proof}

\section{Reflective bias of the interface \label{reflective-bias}}
To derive the expression for the reflective bias $\rho$ in Eq.  \eqref{eq:rho-definition}, we start from the boundary condition in Eq. \eqref{eq:robin_boundary} and decompose the net flux across the interface into two unidirectional fluxes according to
\begin{equation}
    J_\text{net} = \Jin - \Jout = \underbrace{\kin \cin}_{\Jin} - \underbrace{\kout \sqrt{\Delta} \, \cout}_{\Jout}.
    \label{eq:flux-decomposition}
\end{equation}
These unidirectional fluxes correspond to the transmitted fluxes in the limit of a saturated interface ($1/\kappa^* \gg 1$), where the incident fluxes are not determined by bulk diffusion gradients but rather by the permeability of the interface:
\begin{align}
    \Jin &= J^\text{in}_\text{transmitted} = (1-\rhoin) J^\text{in}_\text{incident}, \notag\\
    \Jout &= J^\text{out}_\text{transmitted} = (1-\rhoout) J^\text{out}_\text{incident}.
    \label{eq:reflectivities}
\end{align}
Here, $\rhoin$ and $\rhoout$ denote the effective reflectivities of the interface seen from the inside and the outside, respectively, which account for both direct reflection and for repeated transmission-return events at the interface. The incident fluxes read
\begin{align}
    J^\text{in}_\text{incident} &= (\kout+\kin)\, \cin, \notag\\
    J^\text{out}_\text{incident} &= (\kout+\kin)\, \sqrt{\Delta}\,\cout.
    \label{eq:incident-fluxes}
\end{align}
Note that in the limit of a saturated interface, the diffusive arrival of particles is not rate-limiting and the incident flux from each side is set by the total capacity of the interface to route particles in either direction, which corresponds to the sum of the inner and outer permeabilities. The factor $\sqrt{\Delta}$ in the expression for the incident flux from the outside accounts for the different diffusive arrival rates in both domains. Based on Eq. \eqref{eq:flux-decomposition}-\eqref{eq:incident-fluxes} and the definitions of the inner and outer permeabilities, the effective reflectivities of the interface read
\begin{align}
    \rhoin &= 1-\frac{J^\text{in}_\text{transmitted}}{J^\text{in}_\text{incident}} = \frac{\kout}{\kout+\kin} = \frac{\Gamma/\sqrt{\Delta}}{\Gamma/\sqrt{\Delta}+1},\notag\\ 
    \rhoout &= 1-\frac{J^\text{out}_\text{transmitted}}{J^\text{out}_\text{incident}} = \frac{\kin}{\kout+\kin} = \frac{1}{\Gamma/\sqrt{\Delta}+1}.
    \label{eq:reflection-coefficients}
\end{align}
We then define the reflective bias of the interface as the normalized difference of both reflectivities, yielding
\begin{equation}
    \rho \eqdef \frac{\rhoin-\rhoout}{\rhoin+\rhoout} = \frac{\kout-\kin}{\kout+\kin} = \frac{\Gamma/\sqrt{\Delta}-1}{\Gamma/\sqrt{\Delta}+1}.
    \label{eq:rho-result}
\end{equation}
The expression in Eq. \eqref{eq:rho-result} corresponds to the result reported previously for one-dimensional diffusion across a single interface \cite{bo2021stochastic}. For $\rho=0$, the reflectivity of the interface is the same from both sides. For $0 < \rho \leq 1$, the interface is more reflective from the inside, and for $-1 \leq \rho < 0$, the interface is more reflective from the outside. The partition coefficient is determined by $\rho$ and $\Delta$ according to
\begin{equation}
    \Gamma = \frac{1+\rho}{1-\rho} \sqrt{\Delta}.
\end{equation}

\section{Escape from the inner domain \label{escape-time}}
To determine the time a particle initially located at $x_0$ spends in the inner domain $\Oin$ before having escaped across a boundary layer that surrounds the inner domain, we calculate its mean residence time in $\Oin$ according to
\begin{equation}
    T(x_0) = \int_{t=0}^{\infty} \int_{x \in \Oin} p_t(x, x_0)\, dx\, dt,
\end{equation}
Here, $p_t(x, x_0)$ is the probability density for a particle initially located at position $x_0$ to be found at position $x$ at time $t$, which is defined in Eq.~\eqref{eq:distribution}. It satisfies for $x\in\Oin$:
\begin{alignat}{3}
    p_t(x, x_0) &= p_t(x_0, x) && \text{if } x_0 \in \Oin, \notag\\
    p_t(x, x_0) &= \Gamma p_t(x_0, x)\,\,\, && \text{if } x_0 \in \Oout.
\end{alignat}
Next, we define $\Tin$ and $\Tlayer$, the mean residence times in $\Oin$ for particles initially located in the inner domain and the boundary layer of size $\delta$, respectively:
\begin{align}
    \Tin(x_0) & = \int_{t=0}^{\infty} \int_{x \in \Oin} p_t(x_0, x) \,dx\,dt \notag\\
    \Tlayer(x_0) & = \Gamma \int_{t=0}^{\infty} \int_{x \in \Oin} p_t(x_0, x) \,dx\,dt.
\end{align}
For a fixed $x$, we define $c(t, x_0)\eqdef p_t(x_0, x)$, which is the probability density for a particle to be found at position $x_0$ at time $t$ given that its initial position was $x$. The function $c(t, x_0)$ satisfies the diffusion equations
\begin{alignat}{3}
    \frac{\partial  c_{\mathrm{in}}}{\partial t} &= \Din \Delta c_{\mathrm{in}} && \text{for } 0 \leq r_0 \leq R, \notag\\
    \frac{\partial  c_{\mathrm{layer}}}{\partial t} &= \Dout \Delta c_{\mathrm{layer}} \,\,\,\,\,\,&& \text{for } R \leq r_0 \leq R+\delta,
    \label{eq:diffusion_eq2}
\end{alignat}
with the initial condition $c(0, \cdot) = \delta(x-x_0)$. In Eq. \eqref{eq:diffusion_eq2}, $r_0 = |x_0|$ denotes the radial coordinate. To obtain the equations for the residence time $T$, we integrate the diffusion equation \eqref{eq:diffusion_eq2} in space and time. On the LHS, the following equation is obtained:
\begin{alignat}{3}
    \int_{t=0}^{\infty} &\int_{x \in \Oin} \frac{\partial c}{\partial t} \,dx\, dt &&= \int_{x \in \Oin} c(\infty,x_0) \,dx \notag\\ - &\int_{x \in \Oin} c(0,x_0) \,dx &&= \int_{x \in \Oin} p_{t=\infty}(x_0, x) \,dx \notag\\ - &\int_{x \in \Oin} p_{t=0}(x_0,x) \,dx 
     &&= 0 - \begin{cases}
        1 & \text{for } 0 \leq r_0 \leq R \\
        0 & \text{for } R \leq r_0 \leq R + \delta
    \end{cases}
\end{alignat}
On the RHS, the operator $\int_{t=0}^{\infty} \int_{x \in \Oin} (\cdot) \,dx\, dt$ commutes with the spatial derivatives:
\begin{alignat}{3}
    \Din \Delta T &= -1 \,\,\,\,\,&& \text{for } 0 \leq r_0 \leq R, \notag\\
    \Dout \Delta T &= 0 && \text{for } R \leq r_0 \leq R + \delta.
\end{alignat}
We consider the following boundary conditions:
\begin{align}
    -\Din \Gamma \partial_r \Tin|_{r=R} &= -\Dout \partial_r \Tlayer|_{r=R}, \notag\\
    -\Din \partial_r  \Tin|_{r=R} &= \kappa ( \Tin(R_-) - \Tlayer(R_+) ), \notag\\
    \Tlayer(R+\delta) &= 0.
\end{align}
The first two equations describe the semipermeable interface that conserves flux. The third equation is a Dirichlet condition, which is applied at the edge of the boundary layer, $r=R+\delta$, to treat particles at this location as having escaped. We obtain the solution of the Laplace equation:
\begin{align}
    \Tin(r_0) & = -\frac{r_0^2}{6\Din}+K_1, \notag\\
    \Tlayer(r_0) & = \frac{K_2}{r_0}+K_3,
\end{align}
where $K_1$, $K_2$ and $K_3$ are three constants that are determined by the boundary conditions and read:
\begin{align}
    K_1 & = \frac{R^2}{6 D_{\rm in}} + \frac{R}{3 \kappa} + \frac{\Gamma R^2}{3 D_{\rm out}}\left(\frac{\delta}{R+\delta} \right), \notag\\
    K_2 & = \frac{\Gamma R^3}{3 D_{\rm out}}, \,\,
    K_3 = -\frac{\Gamma R^3}{3 D_{\rm out} (R+\delta)} .
\end{align}
We equate the size of the boundary layer with the distance particles travel during the characteristic internal mixing time $\tau_{\rm mix} \approx R^2/(\pi^2 \Din)$ \cite{zhang2024the}, yielding 
\begin{equation}
    \delta = \sqrt{\Dout \tau_{\rm mix}} = R/\pi \sqrt{\Dout/\Din}.
\end{equation}
Defining the effective diffusion coefficient according to
\begin{equation}
    \Dout^\text{eff} \eqdef D_{\rm out} \left(\frac{R+\delta}{\delta} \right) = \Dout + \sqrt{\pi^2 \Din \Dout},
\end{equation}   
the mean residence time for a particle starting at the center of the inner domain, $r_0=0$, reads
\begin{equation}
    \Tin(0) = K_1 = \frac{R^2}{6 D_{\rm in}} + \frac{R}{3 \kappa} + \frac{\Gamma R^2}{3 \Dout^\text{eff}}.
    \label{eq:tau-escape}
\end{equation}
The three terms can be interpreted as the mean first passage time to reach the interface of the inner domain, which is given by $R^2/(6\Din)$, the time delay due to the interfacial resistance, which is given by $R/(3\kappa)$, and the time to diffuse away from the interface to reach the edge of the boundary layer, which is given by $\Gamma R^2/(3\Dout^\text{eff})$. \\
For $\Dout \gg \Din$, which corresponds to $\delta \gg R$, the effective diffusion coefficient $\Dout^\text{eff}$ converges to $\Dout$ and the expression in Eq. \eqref{eq:tau-escape} becomes similar to that obtained recently for the recovery time in full-FRAP experiments involving a semipermeable sphere in an unbounded outer domain \cite{zhang2024the}.

\section{Integrals in the expressions for half- and full-FRAP curves \label{integrals}}
The radial integral $I_{\text{r},\ell n}$ first used in Eq. \eqref{eq:integrated-curves} reads
\begin{align}
    \scalebox{0.95}{$
    \begin{aligned}
    &I_{\text{r},\ell n} = \int_0^R r^2 j_\ell(\sqrt{\lambda_{\ell n}/\Din} r) dr = \sqrt{\pi}R^3 \frac{(\sqrt{\lambda_{\ell n}/\Din} R)^\ell}{2^{\ell+2}}  \\ &\Gamma \left(\frac{\ell+3}{2} \right) \text{PFQ}\left(\frac{\ell+3}{2},\left( \frac{2\ell+3}{2},\frac{\ell+5}{2} \right),-\frac{{\lambda_{\ell n}/\Din} R^2}{4}\right).
    \end{aligned}
    $}
\label{eq:radial-integral}
\end{align}
Here, $\Gamma(x)$ denotes the Gamma function and \text{PFQ} the regularized generalized hypergeometric function. For $\ell=0$, which is the only case that is relevant for full-FRAP (see Eq. \eqref{eq:integrated-curves-full}), the expression simplifies to
\begin{equation}
    I_{\text{r},0 n} = \frac{\sin(\sqrt{\lambda_{0 n}/\Din} R)}{(\lambda_{0 n}/\Din)^{3/2}} - \frac{R \cos(\sqrt{\lambda_{0 n}/\Din} R)}{\lambda_{0 n}/\Din}.
    \label{eq:radial-integral2}
\end{equation}

The angular projection $A_\ell(\theta,\varphi)$ used in Eq. \eqref{eq:integrated-curves} vanishes for even $\ell>0$ and adopts the following form for odd $\ell>1$:
\begin{align}
    A_\ell(\theta,\varphi) &= \frac{2\ell+1}{4\pi} \int_0^\pi d\varphi' \int_0^\pi \text{sin}\theta' \, d\theta' P_{\ell}(\text{cos}\gamma) \notag\\ &= \frac{2\ell+1}{2}  \frac{(l-2)!!}{(l+1)!!} (-1)^{\frac{l+1}{2}} P_{\ell}(\text{sin}\theta\, \text{cos}\varphi).
    \label{eq:angular-integral}
\end{align}
Here, $P_{\ell}(x)$ denotes the Legendre polynomial of degree $\ell$, and $n!!$ is the double factorial, i.e., the product of all the positive integers up to $n$ that have the same parity (odd or even) as $n$. The angle $\gamma(\theta,\varphi,\theta',\varphi')$ is defined in the main text. For $l=0$ and $l=1$, the values $A_0=0.5$ and $A_1(\theta,\varphi)=-0.75\, \text{sin}\theta\,\text{cos}\varphi$ are obtained, respectively. 

The angular integrals $I^\text{b}_{\text{a},\ell}$ and $I^\text{nb}_{\text{a},\ell}$ used in Eq. \eqref{eq:integrated-curves2} vanish for even $\ell>0$ and adopt the following form for odd $\ell>1$:
\begin{align}
    I^\text{b}_{\text{a},\ell} &= \int_0^\pi d\varphi \int_0^\pi \text{sin}\theta \, d\theta A_\ell(\theta,\varphi) \notag\\ &= \pi (2\ell+1) \left[ \frac{(l-2)!!}{(l+1)!!}\right]^2 (\text{odd}\,\, \ell>1).
    \label{eq:angular-integral2}
\end{align}

\begin{align}
    I^\text{nb}_{\text{a},\ell} &= \int_\pi^{2\pi} d\varphi \int_0^\pi \text{sin}\theta \, d\theta A_\ell(\theta,\varphi) \notag\\ &= -\pi (2\ell+1) \left[ \frac{(l-2)!!}{(l+1)!!}\right]^2 (\text{odd}\,\, \ell>1).
    \label{eq:angular-integral3}
\end{align}

For $l=0$ and $l=1$, the values $I^\text{b}_{\text{a},0}=I^\text{nb}_{\text{a},0}=\pi$, $I^\text{b}_{\text{a},1}=9\pi/16$ and $I^\text{nb}_{\text{a},1}=-9\pi/16$ are obtained, respectively. 

The normalization constant $N_{\ell,n}$ defined in Eq. \eqref{eq:normalization_constant} reads
\begin{align}
    N_{\ell n}^{-2} &= \int_0^R \left[ j_\ell(k r) \right]^2 r^2 dr + \Gamma \beta_{\ell n}^2 \int_R^L \left[ \psi_\ell(k/\sqrt\Delta r) \right]^2 r^2 dr \notag\\
    &= \frac{R^3}{2} \left[j^2_\ell(kR)-j_{\ell-1}(kR)j_{\ell+1}(kR)\right] \notag\\
    & \;\;\;\; + \frac{\Gamma \beta_{\ell n}^2}{2} \left[\ x^3\, \Xi_\ell(k/\sqrt\Delta x; k/\sqrt\Delta L) \right]_{x=R}^{x=L},
\end{align}
with the abbreviation
\begin{align}
    \Xi_\ell(z;z_L) &= y'^2_\ell(z_L)\left(j^2_\ell(z)-j_{\ell-1}(z)j_{\ell+1}(z)\right) \notag\\ 
    &+ j'^2_\ell(z_L) \left(y^2_\ell(z)-y_{\ell-1}(z)y_{\ell+1}(z)\right) \notag\\
    &- 2j'_\ell(z_L)y'_\ell(z_L) \left(j_\ell(z)y_\ell(z)-j_{\ell-1}(z)y_{\ell+1}(z)\right).
\end{align}

When computing integrated recovery curves based on Eq. \eqref{eq:integrated-curves2}, we truncate the spectral sums at $\ell_\text{max}=35$ and $n_\text{max}=150$. To correct for the truncation errors, we estimate the remaining tails of the spectral sums and add the respective terms. First, we estimate the tail of the spectral sum for the isotropic mode, $\ell=0$. For large $n$ and $L \gg R$, the eigenvalues scale as $\lambda_{0n} \approx \Dout n^2 \pi^2 / L^2$. The weights in front of the exponential terms, $W_{0n} = N^2 I_r^2$, decay with $1/n^2$ for large $\kappa$ or faster for small $\kappa$. Therefore, the conservative estimate of the tail error for the isotropic mode reads
\begin{align}
    E_0(t) &= \sum_{n>n_\text{max}} W_{0n} \, e^{-\lambda_{0n} t} \approx Q_0 \int_{k_\text{max}}^\infty \frac{1}{k^2} e^{-\Dout t k^2} dk \notag\\
    &= \frac{Q_0}{k_\text{max}} \left[e^{-\lambda_\text{max}t} - \sqrt{\pi \lambda_\text{max}t} \,\text{erfc}\left(\sqrt{\lambda_\text{max}t}\right) \right].
\end{align}
Here, $k=\sqrt{\lambda/\Dout}$ is the effective wavenumber, and $k_\text{max} = \sqrt{\lambda_\text{max}/\Dout}$ is the wavenumber associated with the $n_\text{max}$-th eigenvalue $\lambda_\text{max}$. The prefactor $Q_0/k_\text{max}$, which is the contribution of the tail at $t=0$, corresponds to the difference between the value of the truncated sum at $t=0$ and the theoretical value of the infinite sum that equals unity.

Next, we estimate the tails of the spectral sums for the higher modes, $\ell>0$. For simplicity, we focus on the dominant eigenvalue for each mode $\ell$. For large $\ell$, the weights in front of the exponential terms scale with $1/\ell^2$. The dominant eigenvalues scale as $\lambda_\ell \approx \Din l^2/R^2$, which represents the asymptotic leading order for modes where angular relaxation becomes fast and decouples from the properties of the interface. Thus, we can write the tail as
\begin{align}
    E_1(t) &= \sum_{l>l_\text{max}} W_{1l} \, e^{-\lambda_\ell t} \approx \int_{l_\text{max}}^\infty \frac{Q_1}{l^2} e^{-\nu' l^2} dl \notag\\
    &= \frac{Q_1}{l_\text{max}} \left[e^{-\nu' l_\text{max}^2} - l_\text{max} \sqrt{\pi \nu'} \,\text{erfc}\left(l_\text{max} \sqrt{\nu'}\right) \right].
\end{align}
Here, $\nu' = \Din t / R^2$. The prefactor $Q_1/l_\text{max}$, which is the contribution of the tail at $t=0$, corresponds to the difference between the values of the truncated sums at $t=0$ and the theoretical values of the infinite sums that equal unity.

\section{Integrated full-FRAP curve for free diffusion \label{free-diffusion}}
For the limiting case of free diffusion, i.e., $\Gamma=1$, $\Din=\Dout=D$ and $\kappa \rightarrow \infty$, the Green's function simplifies to
\begin{equation}
        G_t(x,x') = \frac{1}{(4\pi D t)^{3/2}} e^{-\frac{|x-x'|^2}{4Dt}}.
\end{equation}
To obtain the integrated concentration of bleached particles after the entire inner domain has been bleached, the Green's function is integrated over the inner domain according to
\begin{equation}
        c_\text{full}(t) = \frac{1}{V_\text{in}} \int_{\Oin} \int_{\Oin} \frac{1}{(4\pi D t)^{3/2}} e^{-\frac{|x-x'|^2}{4Dt}} dx\,dx'.
\end{equation}
As the integrand depends on $x$ and $x'$ only via $h=|x-x'|$, we can reduce the 6D integral to a 1D integral using the overlap volume of two spheres, $\gamma(h)$, according to
\begin{equation}
        c_\text{full}(t) = \frac{1}{V_\text{in}} \int_0^{2R} \gamma(h) \frac{4 \pi h^2}{(4\pi D t)^{3/2}} e^{-\frac{h^2}{4Dt}} dh.
        \label{eq:integral-free-diffusion}
\end{equation}
Here, $\gamma(h)$ is the overlap volume of two spheres with radius $R$ that are separated by a distance $h$, which reads
\begin{equation}
        \frac{\gamma(h)}{V} = 1-\frac{3h}{4R}+\frac{h^3}{16R^3}\;\;\;(h \leq 2R).
\end{equation}
After inserting the expression for $\gamma(h)$, subtituting $s=h/\sqrt{4Dt}$, and introducing the upper integration limit $u=R/\sqrt{Dt}$, the integral in Eq. \eqref{eq:integral-free-diffusion} yields
\begin{align}
        c_\text{full}(t) &= \frac{4}{\sqrt{\pi}} \int_0^{u} \left(s^2 - \frac{3 s^3}{2u} + \frac{s^5}{2u^3} \right) e^{-s^2} ds \notag\\ &= \text{erf}(u) - \frac{1}{\sqrt\pi} \left[\left(\frac{3}{u} - \frac{2}{u^3} \right) - e^{-u^2} \left(\frac{1}{u}-\frac{2}{u^3}\right)\right].
\end{align}
Note that this closed-form solution is also obtained from the spectral solution in Eq. \eqref{eq:integrated-curves3}, by taking the limit where the radius of the outer domain $L$ tends to infinity. In this limit, the discrete sum over the eigenmodes transforms into a continuous Fourier-Bessel integral containing the form factor of the sphere, which corresponds to the spectral representation of the overlap volume, yielding the equivalent solution in real space.

\section{Materials and Methods}
\subsection{Preparation of PLL-HA coacervates}
Coacervates containing poly-lysine (PLL) and hyaluronic acid (HA) were prepared as recently described \cite{muzzopappa2022detecting}. In brief, stock solutions of Poly-L-Lysine hydrobromide (PLL, 15–30 kDa; P7890, Sigma), Poly-L-Lysine-FITC hydrobromide (PLL-FITC, 15–30 kDa; P3543, Sigma), Poly-L-Lysine-Atto647N hydrobromide (prepared from unlabeled PLL and Atto647N, AD 647N-31, ATTO-TEC), and Hyaluronic Acid sodium salt (HA, 8–15 kDa; 40583, Sigma) were prepared in 50 mM Tris-Cl pH 8 at a concentration of 10 mg/mL. Coacervates were reconstituted by mixing PLL and HA solutions to obtain a mass ratio of 1:4 and a final concentration of 10 mg/mL, resulting in net charge neutralization. This solution was diluted 1:10 in 50 mM Tris-Cl pH 8, 7.5\% PEG, and different amounts of MgCl\textsubscript{2} (as indicated). For FRAP experiments, FITC-labeled PLL was spiked into unlabeled PLL at a ratio of 1:500. For FCS experiments and for quantifications of the partition coefficient, Atto647N-labeled PLL was spiked into unlabeled PLL at a ratio of 1:50. Coacervates with typical sizes of 2–4 µm were analyzed.

\subsection{Quantitative confocal imaging to determine partition coefficients}
In order to quantify the partition coefficients for PLL-HA condensates at different MgCl\textsubscript{2} concentrations, coacervates containing Atto647N-labeled PLL were prepared as described above. The samples were subsequently placed on 8-well chambered LabTek coverslips and $z$-stacks of field-of-views containing multiple coacervates were recorded. Confocal imaging was carried out on a Zeiss LSM 880 confocal light scanning microscope (Carl Zeiss, Oberkochen, Germany), equipped with a 63×/NA 1.2 oil immersion objective. The same microscope was used to acquire images of solutions of 10 mg/mL PLL where Atto647N-labeled PLL was spiked in at different concentrations ranging from 0 to 200 µM. The same gain and the same HeNe (633 nm) laser settings were used to image each sample.
\\
The average intensity values quantified from the concentration series of Atto647N-labeled PLL were used to obtain a calibration curve that was fitted with the following equation
\begin{equation}
    I(c) = a + b(1-e^{-kc}),
\end{equation}
where $I(c)$ is the average intensity and $c$ is the Atto647N-labeled PLL concentration. To quantify partition coefficients, PLL-HA coacervates were segmented in $z$-projected images based on their intensity using the Otsu method \cite{otsu1975threshold}, and the average intensity in the dense and dilute phase was quantified and converted into concentrations using the above-mentioned calibration curve. For each MgCl\textsubscript{2} concentration, multiple coacervates from at least three images were analyzed, and partition coefficients were calculated according to $\Gamma = c_\text{dense}/c_\text{dilute}$, where $c_\text{dense}$ and $c_\text{dilute}$ correspond to the concentrations in the dense and dilute phase, respectively.

\subsection{Fluorescence Correlation Spectroscopy (FCS)}
FCS experiments were performed on a Zeiss LSM 880 confocal light scanning microscope in Airyscan mode (Carl Zeiss, Oberkochen, Germany), equipped with a 63×/NA 1.2 oil immersion objective. The samples were excited by a HeNe 633 nm laser and the emitted fluorescence light passed the dichroic mirror, a 570-620 nm band-pass filter and a long-pass filter centered at 645 nm.\\
The Airyscan detector was aligned with a homogeneous solution of 100 µg/mL Atto647N-labeled PLL in 50 mM Tris buffer. Then, 25 µL of coacervate samples containing Atto647N-labeled PLL were placed in 96-well plates. 
FCS experiments were conducted within the first 15 min after pipetting the samples at room temperature to minimize evaporation effects.
Independent measurements were performed in the dilute and dense phase to acquire 30,000,000 time points at a time resolution of 1.23 µs, a bit depth of 16 bits, a gain of 750 and maximum pinhole aperture.
\\
FCS data were analyzed with a custom made python script.
The first part of the measurements was discarded to exclude bleaching of an immobile fraction, and the signals from the 32 detectors of the Airyscan array were averaged.
Autocorrelated curves were calculated using the multipletau algorithm implemented in the multipletau python package \cite{muller2012tau}. At least ten experiments were analyzed and averaged per condition. Averaged curves were fitted with the function:
\begin{align}
    G(\tau) = \frac{1}{N} \left(1+\frac{\tau}{\tau_\text{D}}\right)^{-1} \left(1+\left(\frac{w_0}{z_0}\right)^2 \frac{\tau}{\tau_\text{D}}\right)^{-\frac{1}{2}}.
    \label{eq:fcs}
\end{align}
Here, $N$ is the average particle number in the focal volume, $\tau_D$ is the diffusion time, and $w_0$ and $z_0$ are the beam waist in the lateral and axial direction, respectively. The diffusion time is related to the diffusion coefficient $D$ via $\tau_D = w_0^2/4D$. Hence, we obtain the following relationship between the ratio of diffusion coefficients $\Delta$ and the ratio of diffusion times
\begin{equation}
    \Delta = \frac{D_\text{out}}{D_\text{in}} = \frac{\tau_\text{D,in}}{\tau_\text{D,out}}.
    \label{eq:fcs-diffusion-time}
\end{equation}
Here, $\tau_\text{D,in}$ and $\tau_\text{D,out}$ denote the diffusion times measured inside and outside of the condensate, respectively. Conveniently, the resulting expression for $\Delta$ is independent of the beam waist.

\subsection{Fluorescence Recovery After Photobleaching (FRAP)}
Fluorescence recovery after photobleaching (FRAP) experiments were performed on a Zeiss LSM 710 confocal light scanning microscope (Carl Zeiss, Oberkochen, Germany), equipped with a 63×/NA 1.2 oil immersion objective.
Experiments were carried out as described previously \cite{muzzopappa2022detecting}. 
Briefly, 5 µL of each sample were placed on 8-well chambered LabTek coverslips that had been passivated beforehand with 15\% PEG and extensively rinsed.
The experiments were conducted within the first 15 min after pipetting the sample at room temperature to minimize evaporation effects. 
For each expeirment, 300 images were acquired at 128×512 pixels at a scan speed corresponding to 200 ms per image. Before photobleaching, 3 to 5 images were recorded.\\
A custom Python script was used to segment the coacervates and to extract the the average intensity of the bleached half ($I_\text{B}$), the non-bleached half ($I_\text{NB}$), the background of the image ($I_\text{BG}$) and a non-bleached structure ($I_\text{REF}$) in each frame. These values were used to calculate FRAP curves for the bleached half ($\text{FRAP}_\text{B}$) and the non-bleached half ($\text{FRAP}_\text{NB}$):
\begin{equation}
    \text{FRAP}^\text{I}_\text{B/NB} = \frac{I_\text{B/NB}(t) - I_\text{BG}(t) }{ I_\text{REF}(t) - I_\text{BG}(t) } + A.
\end{equation}
Here, $A$ is the unwanted bleaching in the non-bleached half. $\text{FRAP}_\text{B}$ and $\text{FRAP}_\text{NB}$ were multiplied by the sizes of the bleached and non-bleached region ($N_\text{B}$ and $N_\text{NB}$, respectively) to obtain curves that are proportional to the number of particles in each half:
\begin{equation}
    \text{FRAP}^\text{II}_\text{B/NB} = \text{FRAP}^\text{I}_\text{B/NB} \frac{N_\text{B/NB}}{N_\text{B} + N_\text{NB}}.
\end{equation}
The curves were then normalized with respect to the number of bleached molecules:
\begin{equation}
    \text{FRAP}^\text{III}_\text{B/NB} = \frac{\text{FRAP}^\text{II}_\text{B/NB}(t) - \text{FRAP}^\text{II}_\text{B/NB}(t_\text{bleach})}{\text{FRAP}^\text{II}_\text{B}(t_\text{pre}) - \text{FRAP}^\text{II}_\text{B}(t_\text{bleach})}.
\end{equation}
Here, $t_\text{pre}$ and $t_\text{bleach}$ are the acquisition times of the last frame before the bleach and the first frame after the bleach, respectively.
Finally, an additive offset was applied to the signal in the non-bleached half to normalize to unity before the bleach.
For each condition, at least eight experiments were averaged.
To compare FRAP curves across different conditions, recovery curves were plotted against the normalized time defined as
\begin{equation}
    t_\text{norm} = \frac{t}{\tau_\text{FRAP}},
\end{equation}
where $t$ is the time and $\tau_\text{FRAP}$ is the diffusion time obtained from fitting the full-FRAP recovery curves as described previously \cite{muzzopappa2022detecting}.

\nocite{*}
\bibliography{biblio}

@article{hubatsch2025transport,
  title={Transport kinetics across interfaces between coexisting liquid phases},
  author={Hubatsch, Lars and Bo, Stefano and Harmon, Tyler S and Hyman, Anthony A and Weber, Christoph A and J{\"u}licher, Frank},
  journal={eLife},
  pages={2025},
  year={2025}
}

@article{van2024probing,
  title={Probing the surface charge of condensates using microelectrophoresis},
  author={van Haren, Merlijn HI and Visser, Brent S and Spruijt, Evan},
  journal={Nature Communications},
  volume={15},
  number={1},
  pages={3564},
  year={2024},
  publisher={Nature Publishing Group UK London}
}

@article{majee2024charge,
  title={Charge separation at liquid interfaces},
  author={Majee, Arghya and Weber, Christoph A and J{\"u}licher, Frank},
  journal={Physical Review Research},
  volume={6},
  number={3},
  pages={033138},
  year={2024},
  publisher={APS}
}

@article {zhang2024the,
article_type = {journal},
title = {The exchange dynamics of biomolecular condensates},
author = {Zhang, Yaojun and Pyo, Andrew GT and Kliegman, Ross and Jiang, Yoyo and Brangwynne, Clifford P and Stone, Howard A and Wingreen, Ned S},
editor = {Murugan, Arvind and Cui, Qiang},
volume = 12,
year = 2024,
month = {sep},
pub_date = {2024-09-25},
pages = {RP91680},
citation = {eLife 2024;12:RP91680},
doi = {10.7554/eLife.91680},
url = {https://doi.org/10.7554/eLife.91680},
journal = {eLife},
issn = {2050-084X},
publisher = {eLife Sciences Publications, Ltd},
}

@article{scott1951diffusion,
  title={Diffusion through an interface},
  author={Scott, EJ and Tung, LH and Drickamer, HG},
  journal={Journal of Chemical Physics},
  volume={19},
  number={9},
  pages={1075--1078},
  year={1951}
}

@book{brezis2011functional,
  title={Functional analysis, Sobolev spaces and partial differential equations},
  author={Brezis, Haim and Br{\'e}zis, Haim},
  volume={2},
  number={3},
  year={2011},
  publisher={Springer}
}

@article{muzzopappa2022detecting,
  title={Detecting and quantifying liquid--liquid phase separation in living cells by model-free calibrated half-bleaching},
  author={Muzzopappa, Fernando and Hummert, Johan and Anfossi, Michela and Tashev, Stanimir Asenov and Herten, Dirk-Peter and Erdel, Fabian},
  journal={Nature Communications},
  volume={13},
  number={1},
  pages={7787},
  year={2022},
  publisher={Nature Publishing Group UK London}
}

@article{bressloff2022probabilistic,
  title={A probabilistic model of diffusion through a semi-permeable barrier},
  author={Bressloff, Paul C},
  journal={Proceedings of the Royal Society A},
  volume={478},
  number={2268},
  pages={20220615},
  year={2022},
  publisher={The Royal Society}
}

@article{mittag2022conceptual,
  title={A conceptual framework for understanding phase separation and addressing open questions and challenges},
  author={Mittag, Tanja and Pappu, Rohit V},
  journal={Molecular cell},
  volume={82},
  number={12},
  pages={2201--2214},
  year={2022},
  publisher={Elsevier}
}

@article{bo2021stochastic,
  title={Stochastic dynamics of single molecules across phase boundaries},
  author={Bo, Stefano and Hubatsch, Lars and Bauermann, Jonathan and Weber, Christoph A and J{\"u}licher, Frank},
  journal={Physical Review Research},
  volume={3},
  number={4},
  pages={043150},
  year={2021},
  publisher={APS}
}

@article{wang2021surface,
  title={Surface tension and viscosity of protein condensates quantified by micropipette aspiration},
  author={Wang, Huan and Kelley, Fleurie M and Milovanovic, Dragomir and Schuster, Benjamin S and Shi, Zheng},
  journal={Biophysical Reports},
  volume={1},
  number={1},
  year={2021},
  publisher={Elsevier}
}

@article{yewdall2021coacervates,
  title={Coacervates as models of membraneless organelles},
  author={Yewdall, N Amy and Andr{\'e}, Alain AM and Lu, Tiemei and Spruijt, Evan},
  journal={Current Opinion in Colloid \& Interface Science},
  volume={52},
  pages={101416},
  year={2021},
  publisher={Elsevier}
}

@article{hubatsch2021quantitative,
  title={Quantitative theory for the diffusive dynamics of liquid condensates},
  author={Hubatsch, Lars and Jawerth, Louise M and Love, Celina and Bauermann, Jonathan and Tang, TY Dora and Bo, Stefano and Hyman, Anthony A and Weber, Christoph A},
  journal={Elife},
  volume={10},
  pages={e68620},
  year={2021},
  publisher={eLife Sciences Publications Limited}
}

@article{erdel2020mouse,
  title={Mouse heterochromatin adopts digital compaction states without showing hallmarks of HP1-driven liquid-liquid phase separation},
  author={Erdel, Fabian and Rademacher, Anne and Vlijm, Rifka and T{\"u}nnermann, Jana and Frank, Lukas and Weinmann, Robin and Schweigert, Elisabeth and Yserentant, Klaus and Hummert, Johan and Bauer, Caroline and others},
  journal={Molecular cell},
  volume={78},
  number={2},
  pages={236--249},
  year={2020},
  publisher={Elsevier}
}

@article{choi2020physical,
  title={Physical principles underlying the complex biology of intracellular phase transitions},
  author={Choi, Jeong-Mo and Holehouse, Alex S and Pappu, Rohit V},
  journal={Annual review of biophysics},
  volume={49},
  number={1},
  pages={107--133},
  year={2020},
  publisher={Annual Reviews}
}

@article{park2020dehydration,
  title={Dehydration entropy drives liquid-liquid phase separation by molecular crowding},
  author={Park, Sohee and Barnes, Ryan and Lin, Yanxian and Jeon, Byoung-jin and Najafi, Saeed and Delaney, Kris T and Fredrickson, Glenn H and Shea, Joan-Emma and Hwang, Dong Soo and Han, Songi},
  journal={Communications Chemistry},
  volume={3},
  number={1},
  pages={83},
  year={2020},
  publisher={Nature Publishing Group UK London}
}

@inproceedings{schafer2020spherical,
  title={Spherical diffusion model with semi-permeable boundary: A transfer function approach},
  author={Sch{\"a}fer, Maximilian and Wicke, Wayan and Haselmayr, Wetner and Rabenstein, Rudolf and Schober, Robert},
  booktitle={ICC 2020-2020 IEEE International Conference on Communications (ICC)},
  pages={1--7},
  year={2020},
  organization={IEEE}
}

@article{Moutal2019,
  title = {Diffusion Across Semi-permeable Barriers: Spectral Properties,  Efficient Computation,  and Applications},
  volume = {81},
  ISSN = {1573-7691},
  url = {http://dx.doi.org/10.1007/s10915-019-01055-5},
  DOI = {10.1007/s10915-019-01055-5},
  number = {3},
  journal = {Journal of Scientific Computing},
  publisher = {Springer Science and Business Media LLC},
  author = {Moutal,  Nicolas and Grebenkov,  Denis},
  year = {2019},
  month = sep,
  pages = {1630–1654}
}

@article{taylor2019quantifying,
  title={Quantifying dynamics in phase-separated condensates using fluorescence recovery after photobleaching},
  author={Taylor, Nicole O and Wei, Ming-Tzo and Stone, Howard A and Brangwynne, Clifford P},
  journal={Biophysical journal},
  volume={117},
  number={7},
  pages={1285--1300},
  year={2019},
  publisher={Elsevier}
}

@article{rapp2018mechanisms,
  title={Mechanisms of diffusion in associative polymer networks: evidence for chain hopping},
  author={Rapp, Peter B and Omar, Ahmad K and Silverman, Bradley R and Wang, Zhen-Gang and Tirrell, David A},
  journal={Journal of the American Chemical Society},
  volume={140},
  number={43},
  pages={14185--14194},
  year={2018},
  publisher={ACS Publications}
}

@article{hamad2018linear,
  title={Linear viscoelasticity and swelling of polyelectrolyte complex coacervates},
  author={Hamad, Fawzi G and Chen, Quan and Colby, Ralph H},
  journal={Macromolecules},
  volume={51},
  number={15},
  pages={5547--5555},
  year={2018},
  publisher={ACS Publications}
}

@article{carr2018modelling,
  title={Modelling mass diffusion for a multi-layer sphere immersed in a semi-infinite medium: application to drug delivery},
  author={Carr, Elliot J and Pontrelli, Giuseppe},
  journal={Mathematical biosciences},
  volume={303},
  pages={1--9},
  year={2018},
  publisher={Elsevier}
}

@article{maruyama2017random,
  title={Random walk to describe diffusion phenomena in three-dimensional discontinuous media: Step-balance and fictitious-velocity corrections},
  author={Maruyama, Yutaka},
  journal={Physical Review E},
  volume={96},
  number={3},
  pages={032135},
  year={2017},
  publisher={APS}
}

@article{banani2017biomolecular,
  title={Biomolecular condensates: organizers of cellular biochemistry},
  author={Banani, Salman F and Lee, Hyun O and Hyman, Anthony A and Rosen, Michael K},
  journal={Nature reviews Molecular cell biology},
  volume={18},
  number={5},
  pages={285--298},
  year={2017},
  publisher={Nature Publishing Group UK London}
}

@article{spruijt2013linear,
  title={Linear viscoelasticity of polyelectrolyte complex coacervates},
  author={Spruijt, Evan and Cohen Stuart, Martien A and van der Gucht, Jasper},
  journal={Macromolecules},
  volume={46},
  number={4},
  pages={1633--1641},
  year={2013},
  publisher={ACS Publications}
}

@article{mazza2012benchmark,
  title={A benchmark for chromatin binding measurements in live cells},
  author={Mazza, Davide and Abernathy, Alice and Golob, Nicole and Morisaki, Tatsuya and McNally, James G},
  journal={Nucleic acids research},
  volume={40},
  number={15},
  pages={e119--e119},
  year={2012},
  publisher={Oxford University Press}
}

@article{muller2012tau,
  title={Python Multiple-tau Algorithm},
  author={Müller, P},
  journal={Python Package Index},
  year={2012}
}

@article{erdel2011dissecting,
  title={Dissecting chromatin interactions in living cells from protein mobility maps},
  author={Erdel, Fabian and M{\"u}ller-Ott, Katharina and Baum, Michael and Wachsmuth, Malte and Rippe, Karsten},
  journal={Chromosome research},
  volume={19},
  number={1},
  pages={99--115},
  year={2011},
  publisher={Springer}
}

@article{brangwynne2009germline,
  title={Germline P granules are liquid droplets that localize by controlled dissolution/condensation},
  author={Brangwynne, Clifford P and Eckmann, Christian R and Courson, David S and Rybarska, Agata and Hoege, Carsten and Gharakhani, J{\"o}bin and J{\"u}licher, Frank and Hyman, Anthony A},
  journal={Science},
  volume={324},
  number={5935},
  pages={1729--1732},
  year={2009},
  publisher={American Association for the Advancement of Science}
}

@article{grebenkov2008analytical,
  title={Analytical solution for restricted diffusion in circular and spherical layers under inhomogeneous magnetic fields},
  author={Grebenkov, Denis S},
  journal={The Journal of chemical physics},
  volume={128},
  number={13},
  year={2008},
  publisher={AIP Publishing}
}

@article{rubinstein2001dynamics,
  title={Dynamics of entangled solutions of associating polymers},
  author={Rubinstein, Michael and Semenov, Alexander N},
  journal={Macromolecules},
  volume={34},
  number={4},
  pages={1058--1068},
  year={2001},
  publisher={ACS Publications}
}

@article{powles1992exact,
  title={Exact analytic solutions for diffusion impeded by an infinite array of partially permeable barriers},
  author={Powles, Jack G and Mallett, Margaret JD and Rickayzen, Gerald and Evans, WAB},
  journal={Proceedings of the Royal Society of London. Series A: Mathematical and Physical Sciences},
  volume={436},
  number={1897},
  pages={391--403},
  year={1992},
  publisher={The Royal Society London}
}

@article{otsu1975threshold,
  title={A threshold selection method from gray-level histograms},
  author={Otsu, Nobuyuki and others},
  journal={Automatica},
  volume={11},
  number={285-296},
  pages={23--27},
  year={1975}
}

@book{ArfkenWeberHarris,
  author    = {Arfken, George B. and Weber, Hans J. and Harris, Frank E.},
  title     = {Mathematical Methods for Physicists},
  edition   = {7},
  publisher = {Academic Press},
  year      = {2012},
  address   = {Boston, MA},
  isbn      = {978-0-12-384654-9}
}

\end{document}